\documentclass[preprint,12pt]{elsarticle}

\usepackage{graphicx}
\usepackage{natbib}

\usepackage{algorithm}
\usepackage{algorithmic}
\usepackage{comment}

\usepackage{amsthm}
\usepackage{amssymb}
\usepackage{bbm}
\usepackage{amsmath}
\usepackage{amsfonts}
\usepackage{mathtools}
\usepackage{framed}
\usepackage{pifont}
\usepackage{xspace}

\usepackage{hyperref}

\usepackage{float}
\usepackage{tikz}
\usetikzlibrary{decorations.pathreplacing,patterns,positioning}

\usepackage[dvipsnames]{xcolor}

\usepackage{multirow}

\newcommand*{\task}[4]{
	\draw[rounded corners] (#3-#2,#4+0.1) rectangle (#3,#4+0.6);
	\draw (#3-#2/2,#4+0.35) node {#1};
}

\newcommand*{\taskDashed}[4]{
	\draw[rounded corners,dashed] (#3-#2,#4+0.1) rectangle (#3,#4+0.6);
	\draw (#3-#2/2,#4+0.35) node {#1};
}

{\itshape}{\rmfamily}

{\itshape}{\rmfamily}

\newcommand{\sigmaD}{$\Sigma $D\xspace}

\newtheorem{innercustomgeneric}{\customgenericname}
\providecommand{\customgenericname}{}

\newtheorem{theorem}{Theorem}
\newtheorem{lemma}{Lemma}

\newtheorem{obs}{Observation}

\newcommand{\setTask}{\ensuremath{\mathcal{J}}\xspace}
\newcommand{\nbTask}{\ensuremath{n}\xspace}
\newcommand{\setVoter}{\ensuremath{\mathcal{V}}\xspace}
\newcommand{\nbVoter}{\ensuremath{v}\xspace}
\newcommand{\pref}{\ensuremath{V}\xspace}
\newcommand{\prefVoter}[1]{\ensuremath{\pref_{#1}}\xspace}
\newcommand{\prefProfile}{\ensuremath{\mathcal{P}}\xspace}

\newcommand{\sched}{\ensuremath{S}\xspace}
\newcommand{\setPerm}{\ensuremath{\mathcal{X}^{\setTask}}\xspace}

\newcommand{\completionTime}{\ensuremath{C}\xspace}
\newcommand{\completionTimeTask}[1]{\ensuremath{\completionTime}_{#1}\xspace}
\newcommand{\completionTimeTaskSchedule}[2]{\ensuremath{\completionTimeTask{#1}(#2)}\xspace}
\newcommand{\dueDate}{\ensuremath{d}\xspace}
\newcommand{\dueDateTaskVoter}[2]{\ensuremath{\dueDate_{#1,#2}}\xspace}

\newcommand{\procTime}{\ensuremath{p}\xspace}
\newcommand{\procTimeTask}[1]{\ensuremath{\procTime_{#1}}\xspace}

\newcommand{\decprob}[3]{\textsc{#1}:\\\textbf{Input:} #2\\\textbf{Question:} #3}

\newcommand{\setInt}{\ensuremath{\mathcal{X}}\xspace}
\newcommand{\nameInt}{\ensuremath{x}\xspace}
\newcommand{\nbTriplet}{\ensuremath{q}\xspace}
\newcommand{\nameTriplet}{\ensuremath{T}\xspace}
\newcommand{\sumTriplet}{\ensuremath{B}\xspace}

\newcommand{\setIntTask}{\ensuremath{\mathcal{T}}\xspace}
\newcommand{\nameTask}{\ensuremath{t}\xspace}

\newcommand{\setSeparator}[1]{\ensuremath{A_{#1}}\xspace}
\newcommand{\nameSeparator}{\ensuremath{a}\xspace}
\newcommand{\setLeft}{\ensuremath{\mathcal{L}}\xspace}
\newcommand{\nameLeft}{\ensuremath{l}\xspace}
\newcommand{\setRight}{\ensuremath{\mathcal{R}}\xspace}
\newcommand{\nameRight}{\ensuremath{r}\xspace}
\newcommand{\setCentral}{\ensuremath{\mathcal{M}}\xspace}
\newcommand{\nameCentral}{\ensuremath{m}\xspace}

\newcommand{\factorK}{\ensuremath{K}\xspace}
\newcommand{\sizeSeparator}{\ensuremath{O}\xspace}
\newcommand{\sizeLargeSeparator}{\ensuremath{O}'\xspace}
\newcommand{\sumTripletNew}{\ensuremath{B'}\xspace}

\newcommand{\targetSum}{\ensuremath{Z}\xspace}

\newcommand{\sigmaDdecprob}{\textsc{\sigmaD-Dec}\xspace}

\newcommand{\intuition}[1]{\textit{#1}}

\begin{document}

\title{Two NP-hard Extensions of the Spearman Footrule even for a Small Constant Number of Voters}

\author{Martin Durand, martin.durand@lip6.fr \\
       Sorbonne Université, LIP6, CNRS,\\
       4 Place Jussieu, 75005, Paris, France}


\begin{frontmatter}







\begin{abstract}

The Spearman footrule is a voting rule that takes as input voter preferences expressed as rankings. It outputs a ranking that minimizes the sum of the absolute differences between the position of each candidate in the ranking and in the voters’ preferences. In this paper, we study the computational complexity of two extensions of the Spearman footrule when the number of voters is a small constant. The first extension, introduced by Pascual et al. (2018), arises from the collective scheduling problem and treats candidates, referred to as tasks in their model, as having associated lengths. The second extension, proposed by Kumar and Vassilvitskii (2010), assigns weights to candidates; these weights serve both as lengths, as in the collective scheduling model, and as coefficients in the objective function to be minimized. Although computing a ranking under the standard Spearman footrule is polynomial-time solvable, we demonstrate that the first extension is NP-hard with as few as 3 voters, and the second extension is NP-hard with as few as 4 voters. Both extensions are polynomial-time solvable for 2 voters.

\end{abstract}

\begin{keyword}



Computational social choice \sep Complexity theory \sep Scheduling \sep Multi-agent scheduling  

\end{keyword}

\end{frontmatter}

\section{Introduction}
\label{sec:introduction}

In this paper, we analyze the computational complexity of a voting rule designed for the collective scheduling framework~\citep{pascual2018collective, durand2022collective}. Under this framework, a group of individuals, referred to as voters, express their preferences over the sequential execution of shared tasks on a single common machine. This setting models several real-world scenarios. For instance, consider a public infrastructure project, such as expanding a city’s subway system with multiple new lines or constructing new cycle paths. Given constraints on workforce, machinery, and annual budgets, these projects are typically decomposed into sequential phases of varying durations, e.g., longer cycle paths require longer construction phases. A possible way to reach a decision on the execution order of these phases is to solicit the preferred order of the concerned individuals, or their representatives. This raises the following question: Given the preferences of all individuals, how can we construct a collective schedule that maximizes overall satisfaction?

This problem generalizes the consensus ranking problem~\cite{brandt2016handbook}. Specifically, in the consensus ranking problem, a set of $\nbVoter$ voters each provide a ranking of $\nbTask$ candidates (or items). The objective is then to derive a consensus ranking from these preferences.
To illustrate the connection, suppose that all tasks in our problem have unit length. In this case, each task can be treated as a candidate, and a schedule can be interpreted as a ranking of these candidates. Consequently, computing a collective schedule reduces to computing a consensus ranking.
In this paper, we examine the computational complexity of the \sigmaD voting rule, as introduced by~\citeauthor{pascual2018collective}~\cite{pascual2018collective}. Formally, the \sigmaD rule takes as input the preferences of voters over the execution order of a set of shared tasks, expressed as permutations. It outputs a schedule $\sched$ on a single machine, that is, a permutation of the tasks, that minimizes the sum, over all tasks, of the absolute differences between the completion time of each task in $\sched$ and its completion time in the voters’ preferred schedules. The authors observe that the rule \sigmaD generalizes both the Spearman footrule~\citep{diaconis1977spearman}. Indeed, the Spearman footrule returns a permutation that minimizes the absolute difference between the position of each candidate in the returned permutation and in the preference of each voter, also called \emph{Spearman distance}. If all task are of length 1, the position equals the completion time.

\medskip

\noindent{\bf Related work.}
This work examines the computational complexity of a consensus ranking problem, situating it within the field of computational social choice~\citep{brandt2016handbook}. 
The classic Spearman footrule~\citep{diaconis1977spearman} can be seen as an assignment problem: There are $\nbTask$ candidates and $\nbTask$ slots, the goal is to assign each candidate to a slot. This problem is solvable in polynomial time using the Hungarian algorithm~\citep{Tomizawa71,EdmondsK72}. 
The Spearman footrule has been studied through its link with a well-known rule: The Kemeny rule~\cite{kemeny1959mathematics}. The Kemeny rule returns a permutation minimizing the number of pairwise disagreement with the preferences of the voters, also called \emph{Kendall-tau distance}. Computing a permutation minimizing the Kendall-tau distance with the preference profile is an NP-hard problem, even with as few as 4 voters~\cite{bartholdi1989voting}. However,~\citeauthor{diaconis1977spearman}~\cite{diaconis1977spearman} show that the Spearman distance of a permutation with a preference profile is at most twice its Kendall-tau distance with the same preference profile, showing that the Spearman rule returns a permutation which is 2-approximate for Kendall-tau distance minimization.

The collective scheduling problem has been introduced by~\citeauthor{pascual2018collective}~\cite{pascual2018collective}. They propose the \sigmaD rule as an extension of the absolute deviation criterion from scheduling theory~\cite{brucker1999scheduling}. Given a due date, the deviation of a task in a schedule is defined as the absolute difference between the completion time of the task in the schedule and the due date. In the non-collective scheduling problem, each task has a single due date and the goal is to compute a schedule minimizing the sum of the deviations of the tasks. This problem has been shown to be NP-hard~\cite{wan2013single}.

\citeauthor{pascual2018collective}~\cite{pascual2018collective} also introduce a weighted variant of the Condorcet principle, called the PTA Condorcet principle (where PTA stands for ``Processing Time Aware''). 
\citeauthor{durand2022collective}~\cite{durand2022collective} define an extension of the Kemeny rule based on PTA Condorcet consistency. They also show that the \sigmaD rule solves an NP-hard problem in the general case. Their reduction however relies on a large number of voters. Our contribution strengthens this result: We demonstrate NP-hardness for as few as 3 voters, ruling out parameterized approaches based solely on the number of voters.

\citeauthor{kumar2010generalized}~\cite{kumar2010generalized} study generalizations of the Spearman footrule and of the Kemeny rule~\citep{kemeny1959mathematics}, and show that these generalizations respect a weaker version of~\citeauthor{diaconis1977spearman}'s result. Notably, they introduce candidate weights, which define both the space a candidate occupies, i.e., a candidate of weight 3 requires 3 slots, and its importance in the objective function, i.e., the weight multiplies the candidate’s contribution to the total cost. In contrast, the collective scheduling setting assumes that all tasks have equal importance, with only their lengths varying. 
\citeauthor{kumar2010generalized}~\cite{kumar2010generalized} do not investigate the complexity of their extension of the  Spearman footrule, only noting that the assignment approach no longer applies. 
We provide complexity results for both generalizations of Spearman footrule: (1) the one in which candidates only have a size, i.e., in the collective schedule framework; and (2) the one in which the candidates have weights, encoding both their size and importance in the objective function. \\



\noindent{\bf Our contribution.}
We provide two polynomial time reductions using a small constant number of voters, namely 4 and 3, showing NP-hardness of two preference aggregation problems. The first problem consists in finding a collective schedule according to the \sigmaD rule. We show that this problem is NP-hard, even with as few as 3 voters. The second problem consists in returning a ranking minimizing the extended Spearman distance with candidate weights, as defined by~\citeauthor{kumar2010generalized}~\cite{kumar2010generalized}. For this second problem, only the reduction with 4 voters directly holds. 
We also show that these two problems can be solved in polynomial time when there are only 2 voters.

\medskip

\noindent{\bf Organization of the paper.}
Section~\ref{sec:preliminaries} introduces the collective scheduling framework as well as the \sigmaD voting rule. 
In Section~\ref{sec:complexity}, we show that the problem solved by \sigmaD is easy when there are 2 voters but NP-hard even when there are as few as 3 voters. We also provide another reduction with 4 voters, which holds in the case were tasks have weights.
We conclude the paper with a discussion in Section ~\ref{sec:conclu}. 

\section{Preliminaries}
\label{sec:preliminaries}

\subsection{Notations.}
Let $[\nbTask]=\{1,\dots,\nbTask\}$.
Let $\setTask\!=\![\nbTask]$ be the set of tasks. Each task $i\!\in\!\setTask$ has a length (or processing time) $\procTimeTask{i}$. 
We do not consider idle times between the tasks, and preemption is not allowed: A schedule of the tasks is thus a permutation of the tasks of $\setTask$. We denote by \setPerm the set of all possible schedules. 
We denote by $\setVoter\!=\![\nbVoter]$ the set of $\nbVoter$ voters. Each voter $k\!\in\! \setVoter$ expresses her favorite schedule $\prefVoter{k} \!\in\!\setPerm$ of the tasks in $\setTask$. The preference profile, \prefProfile, is the set of the favorite schedules of the $\nbVoter$ voters: $\prefProfile=\{\prefVoter{1}, \dots, \prefVoter{\nbVoter}\}$. 
Given a schedule \sched, we denote by \completionTimeTaskSchedule{a}{\sched} the completion time of task $a$ in \sched. We denote by \dueDateTaskVoter{a}{k} the completion time of task $a$ in the preferred schedule of voter $k$ (i.e., $\dueDateTaskVoter{a}{k}\!=\!\completionTimeTaskSchedule{a}{\prefVoter{k}}$) -- here $\dueDate$ stands for ``due date'' as this completion time can be seen as a due date of voter $a$ for task $k$. Given two tasks $a$ and $b$ and a schedule \sched, we denote by $a\succ_{\sched}b$ the fact that task $a$ is scheduled before task $b$ in schedule $\sched$.  We will also use this notation to describe a schedule. For example, the 3 tasks schedule $\sched$ in which task $a$ is scheduled in first position, followed by task $b$, and then task $c$ is described as $a\succ_{\sched}b \succ_{\sched}c$. 

An \emph{aggregation rule} is a mapping $r\!:\!(\setPerm)^v\!\rightarrow\!\setPerm$ that associates a schedule $\sched$ -- the consensus schedule -- to any preference profile $\prefProfile$. 

\medskip

The \sigmaD rule 
is an extension of the Absolute Deviation (D) scheduling metric \citep{brucker1999scheduling}. This metric measures the deviation between a schedule \sched and a set of given due dates for the tasks of the schedule. It sums, over all the tasks, the absolute value of the difference between the completion time of task $i$ in $\sched$ and its due date. 
By considering the completion time \dueDateTaskVoter{i}{k} of task $i$ in the preferred schedule $\prefVoter{k}$ as a due date given by voter $k$ for task $i$, we express the deviation $D(\sched, \prefVoter{k})$ between schedule \sched and schedule \prefVoter{k} as $D(\sched,\prefVoter{k})=\sum_{i \in \setTask} |\completionTimeTaskSchedule{i}{\sched}-\dueDateTaskVoter{i}{k}|$.
By summing over all the voters, we obtain a metric $D(\sched, \prefProfile)$ measuring the deviation between a schedule \sched and a preference profile \prefProfile:

\begin{equation}
    D(\sched,\prefProfile)=\sum\limits_{k \in \setVoter} \sum\limits_{i \in \setTask} |\completionTimeTaskSchedule{i}{\sched}-\dueDateTaskVoter{i}{k}|
\label{eq:formule_sigmaD}
\end{equation}

The \sigmaD rule returns a schedule $\sched^*$ minimizing the deviation with the preference profile $\prefProfile$: $D(\sched^*,\prefProfile)\!=\!\min_{\sched \in \setPerm}D(\sched, \prefProfile)$. 
We will call \emph{deviation of the task} $i$ in schedule \sched the value $\sum_{k \in \setVoter} |\completionTimeTaskSchedule{i}{\sched}-\dueDateTaskVoter{i}{k}|$. This corresponds to the cost of scheduling task $i$ at its place in schedule \sched.\\

This rule was introduced (but not studied) by Pascual et al. \citep{pascual2018collective}, where the authors observed that if tasks have unitary lengths this rule minimizes the Spearman distance. The Spearman distance is a well-known aggregation rule for the consensus ranking problem. Given a ranking $\sched$ and a preference profile $\prefProfile$, it is defined as $S(\sched,\prefProfile)\!=\!\sum_{k \in \setVoter} \sum_{i \in \setTask} |pos_i(\sched)-pos_i(\prefVoter{k})|$, where $pos_i(\sched)$ is the position of item $i$ in ranking $\sched$ (which means in our case the completion time of task $i$ in schedule \sched if items are unitary tasks). 

We also express the objective function of the Spearman footrule with weighted candidates \cite{kumar2010generalized} in the collective scheduling framework:

\begin{equation}
    D_w(\sched,\prefProfile)=\sum\limits_{k \in \setVoter} \sum\limits_{i \in \setTask} \procTimeTask{i}|\completionTimeTaskSchedule{i}{\sched}-\dueDateTaskVoter{i}{k}|
\label{eq:formule_sigmaD_weighted}
\end{equation}

\medskip

\section{Computational Complexity.}
\label{sec:complexity}

We start this section with a general remark: The deviation of a task is minimized when it is scheduled to complete at its \emph{median completion time} in the preference profile if the number of voters is odd, or within its \emph{median completion interval} in the preference profile, if the number of voters is even. Indeed, consider that a task completes at its median completion time, moving it later by one unit of time increases the deviation by one unit of time for at least $\nbVoter/2$ voters while it may decrease the deviation by one unit for at most $\nbVoter/2$. Similarly, if the number of voters is even, consider the interval between the $\nbVoter/2^{th}$ smallest completion time and the $\nbVoter/2+1^{th}$ smallest completion time in the preference profile, this is what we call the median completion interval. As long as the task completes within this interval, moving it by one unit of time increases the deviation for exactly $\nbVoter/2$ voters and decreases it for exactly $\nbVoter/2$ voters, however, if it goes out of the interval, then it increases the deviation of strictly more than $\nbVoter/2$ voters and decreases the deviation for at strictly less than $\nbVoter/2$ voters.\\

The decision version of our problem, that we will call {\sigmaDdecprob} is the following one:\\
\noindent\decprob{\sigmaDdecprob}{A set $\setTask=\{1,\dots,\nbTask\}$ of \nbTask tasks, such that $\forall i \in [\nbTask]$ task $\nameTask_i$ has processing time $\procTime_i$; a set $\prefProfile=\{\prefVoter{1},\dots,\prefVoter{\nbVoter}\}$ of $\nbVoter$ schedules of tasks \setTask; an integer \targetSum.}{Is there a schedule \sched\ of the tasks of \setTask\ on one machine such that $D(\sched,\prefProfile) \leq \targetSum$?}\\

When there are exactly two voters, the problem is easy to solve: We return one of the two schedules in the preference profile. Indeed, in such a schedule, each task is in its median completion interval, meaning that no schedule could have a total deviation strictly lower.\\ 
Note that this applies to the weighted case as well. Indeed the weights merely multiply the deviation of each task, if the deviation of each task is minimized, then so is the sum, weighted or not.\\
When there are strictly more than two voters, the \sigmaDdecprob problem is NP-hard, as shown later. 

We start with a reduction with 4 voters. 

\begin{theorem}
The \sigmaDdecprob problem is NP-hard, even when there are only 4 voters. 
\end{theorem}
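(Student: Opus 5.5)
We reduce from 3-\textsc{Partition}, which is strongly NP-hard, so we may assume the integers are encoded in unary. The macros already in the preamble suggest the shape of the construction: integer tasks \setIntTask, separators \setSeparator{}, left/right/central tasks \setLeft, \setRight, \setCentral, and scaling factors \factorK, \sizeSeparator. An instance is a multiset $\setInt=\{\nameInt_1,\dots,\nameInt_{3\nbTriplet}\}$ with $\sum_i \nameInt_i=\nbTriplet\sumTriplet$ and $\sumTriplet/4<\nameInt_i<\sumTriplet/2$. We first multiply all integers by a large factor \factorK, so that rounding effects of size $O(\nbTriplet)$ are negligible.

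The schedule will consist of $\nbTriplet$ ``bins'' of length $\factorK\sumTriplet$, separated by long separator tasks $\nameSeparator_1,\dots,\nameSeparator_{\nbTriplet-1}$ of length $\sizeSeparator\gg \factorK\nbTriplet\sumTriplet$. Every separator gets the same completion time in all four voters' schedules, so its median interval is a single point. Moving any separator then costs at least $4$ per unit of displacement, which pins the separators and hence the bins.

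The integer tasks $\nameTask_i$, of length $\factorK\nameInt_i$, must be made indifferent to which bin they land in. With four voters the median completion interval of a task runs between its 2nd and 3rd smallest due dates. We use two voters that put all integer tasks at the very beginning (before every separator) and two voters that put them all at the very end. Each integer task then has a median interval covering essentially the whole horizon, so its deviation is constant wherever it completes inside that range.

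The difficulty is that the voters' schedules must be permutations of the same task set. Two voters moving all integer tasks to the front pushes the separators later in their schedules, and this breaks the requirement that separators share a common completion time. To fix this we add filler tasks: the left tasks \setLeft and right tasks \setRight of total length equal to that of the integer tasks. The ``front'' voters place the integer tasks first and the fillers at the end. The ``back'' voters place the fillers first and the integer tasks last. Separators sit in the middle at identical times in all four schedules.

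Each filler must also have a tight median, so that in a good schedule the fillers occupy fixed slots at the two extremes. To achieve this we split the fillers into two groups, each ordered identically by exactly two voters. Assigning orders carefully gives each filler an interval where it costs a known minimum. A possible extra ingredient is central tasks \setCentral, which balance the lengths so that separators coincide. The target \targetSum is the sum of all minimal per-task deviations.

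Correctness then has two directions.
\begin{itemize}
\item If a 3-partition exists, placing the fillers at their slots, the separators at their common points and each triple in one bin gives every task its minimum deviation, so the total is at most \targetSum.
\item Conversely, if $D(\sched,\prefProfile)\le \targetSum$, every task must lie in its median interval. Hence the separators are at their exact points, the fillers are at the extremes, and the integer tasks exactly fill the $\nbTriplet$ bins of length $\factorK\sumTriplet$. Since tasks cannot be preempted, each bin contains integer tasks summing exactly to \sumTriplet, which yields a 3-partition.
\end{itemize}
Setting \targetSum to the exact sum of minima makes the reverse direction almost immediate, because any slack would let tasks straddle a boundary.

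The main obstacle is building four permutations in which both the separators have identical completion times in all voters and every filler has a median interval compatible with one global schedule. Once the per-task minima are simultaneously achievable, the rest is bookkeeping. The first attempt will be the symmetric design above, with two voters putting integers front and fillers back and two doing the reverse, checking that the 2nd/3rd due dates of each filler bracket its intended slot. If a pure point-median for separators fails with four voters, the fallback is to give separators enormous length \sizeLargeSeparator. Then even a small misplacement pushes many tasks out of their intervals at cost proportional to $\sizeLargeSeparator$, exceeding any slack.
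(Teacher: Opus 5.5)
Your skeleton matches the paper's: unanimous separators, integer tasks with two due dates before and two after the middle, and \targetSum equal to the sum of per-task minima, so that meeting \targetSum forces every task into its median interval $[d_{(2)},d_{(3)}]$. But the step you yourself call the main obstacle is left open, and the layout you sketch does not work.

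\textbf{The gap.} In every voter's preferred schedule the bins between separators must be filled by some tasks. In a schedule of cost \targetSum, those tasks must sit at their medians \emph{outside} the bins, since the integer tasks are supposed to be there. None of your ingredients can play this role.
\begin{itemize}
\item Your fillers cannot do it. Front voters put them after the middle and back voters before it, so each filler has two due dates on each side. Its median interval therefore spans the whole middle, exactly like an integer task's.
\item Splitting the fillers into groups on which two voters agree does not help. A two-voter agreement yields a point median only if the other two due dates lie on opposite sides of it, and in your layout they always lie on the same side.
\item Central tasks \setCentral placed identically by all four voters in the bins have point medians inside the bins. A cost-\targetSum schedule then keeps them there, so the integer tasks are never forced into the bins. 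Indeed, if the two front voters order integers and fillers identically, voter~1's own schedule attains the sum of minima on every input.
\item Conversely, evicting \setCentral from the bins costs $4$ per unit of displacement, so \targetSum becomes unattainable even on yes-instances.
\end{itemize}

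\textbf{The missing idea.} Each voter needs a \emph{private} set of bin fillers, pinned outside the middle by the other three voters. The paper uses four disjoint sets $\mathcal{C}_1,\dots,\mathcal{C}_4$ of $\nbTriplet\sumTriplet$ unit tasks each, and unit separators $a_1,\dots,a_{\nbTriplet-1}$. Voter $k$'s middle consists of $\mathcal{C}_k$ in chunks of $\sumTriplet$ between the separators. The profile is:
\begin{itemize}
\item voter~1: integers, $\mathcal{C}_2$, middle, $\mathcal{C}_3$, $\mathcal{C}_4$;
\item voter~2: $\mathcal{C}_1$, integers, middle, $\mathcal{C}_3$, $\mathcal{C}_4$;
\item voter~3: $\mathcal{C}_1$, $\mathcal{C}_2$, middle, integers, $\mathcal{C}_4$;
\item voter~4: $\mathcal{C}_1$, $\mathcal{C}_2$, middle, $\mathcal{C}_3$, integers.
\end{itemize}
Each $\mathcal{C}$-task is placed at the same time by three voters, hence has a point median outside the middle. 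The separators are unanimous. Each integer task's median interval runs from its voter-2 to its voter-3 completion time, which covers the middle.

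A schedule attaining the sum of minima is therefore $\mathcal{C}_1,\mathcal{C}_2$, middle, $\mathcal{C}_3,\mathcal{C}_4$, with the integer tasks exactly filling the $\nbTriplet$ bins of length $\sumTriplet$. Your two correctness directions then go through as you wrote them, with no need for \factorK or long separators.
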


\begin{proof}

For both the three voters case and the four voters case, we reduce the NP-complete \textsc{3-Partition} problem to our problem. 

\noindent\decprob{3-Partition}{A set $\setInt=\{\nameInt_1,\dots,\nameInt_{3\nbTriplet}\}$ of $3\nbTriplet$ integers, an integer \sumTriplet.}{Is there a partition of \setInt\ into \nbTriplet\ triplets $\nameTriplet_1$ to $\nameTriplet_\nbTriplet$ such that $\forall i \in [\nbTriplet]$, the sum of the integers in $\nameTriplet_i$ is exactly \sumTriplet?}\\

We use the fact that the deviation of a single task is minimized when it is completed at its median completion time. 
Therefore, if a solution in which each task completes at its median completion time exists, then it minimizes the total absolute deviation.
We create an instance for which such a solution exists if and only if the \textsc{3-Partition} instance is a yes-instance. 

Starting from the \textsc{3-Partition} problem, we create one task for each integer, with a processing time equal to the value of the integer, and we group these tasks into a set $\setTask$. We then create 4 sets $\mathcal{C}_1$ to $\mathcal{C}_4$, each containing as many unit tasks as the total sum of the integers (i.e., the total processing time of $\setTask$), so that each of these sets has the same total processing time. Finally, we create separator tasks of length 1, $\nameSeparator_1$ to $\nameSeparator_{\nbTriplet-1}$, where $\nbTriplet$ denotes the number of triplets to be formed in the \textsc{3-Partition} instance. The voters' preferences are illustrated in Figure~\ref{fig:pref_sketch}.

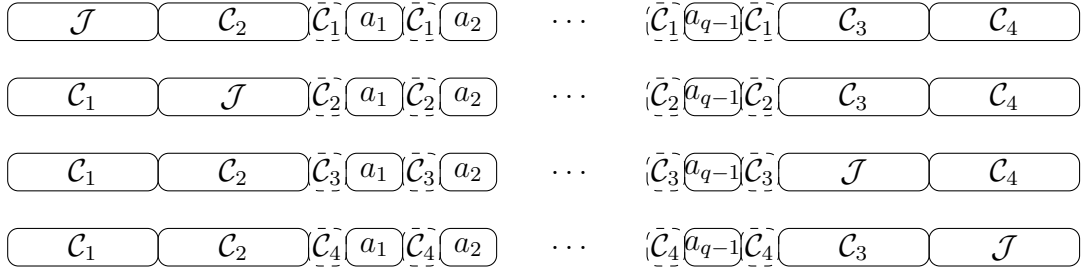
\begin{figure}[h!]
    \centering
    \begin{tikzpicture}

    \task{\setTask}{2}{2}{3}
    \task{$\mathcal{C}_2$}{2}{4}{3}
    \taskDashed{$\mathcal{C}_1$}{0.5}{4.5}{3}
    \task{$\nameSeparator_{1}$}{0.75}{5.25}{3}
    \taskDashed{$\mathcal{C}_1$}{0.5}{5.75}{3}
    \task{$\nameSeparator_{2}$}{0.75}{6.5}{3}

    \node at (7.5,3.35) {$\dots$};

    \taskDashed{$\mathcal{C}_1$}{0.5}{9}{3}
    \task{$\nameSeparator_{\nbTriplet-1}$}{0.75}{9.75}{3}
    \taskDashed{$\mathcal{C}_1$}{0.5}{10.25}{3}
    \task{$\mathcal{C}_3$}{2}{12.25}{3}
    \task{$\mathcal{C}_4$}{2}{14.25}{3}

    \task{$\mathcal{C}_1$}{2}{2}{2}
    \task{\setTask}{2}{4}{2}
    \taskDashed{$\mathcal{C}_2$}{0.5}{4.5}{2}
    \task{$\nameSeparator_{1}$}{0.75}{5.25}{2}
    \taskDashed{$\mathcal{C}_2$}{0.5}{5.75}{2}
    \task{$\nameSeparator_{2}$}{0.75}{6.5}{2}

    \node at (7.5,2.35) {$\dots$};

    \taskDashed{$\mathcal{C}_2$}{0.5}{9}{2}
    \task{$\nameSeparator_{\nbTriplet-1}$}{0.75}{9.75}{2}
    \taskDashed{$\mathcal{C}_2$}{0.5}{10.25}{2}
    \task{$\mathcal{C}_3$}{2}{12.25}{2}
    \task{$\mathcal{C}_4$}{2}{14.25}{2}

    \task{$\mathcal{C}_1$}{2}{2}{1}
    \task{$\mathcal{C}_2$}{2}{4}{1}
    \taskDashed{$\mathcal{C}_3$}{0.5}{4.5}{1}
    \task{$\nameSeparator_{1}$}{0.75}{5.25}{1}
    \taskDashed{$\mathcal{C}_3$}{0.5}{5.75}{1}
    \task{$\nameSeparator_{2}$}{0.75}{6.5}{1}

    \node at (7.5,1.35) {$\dots$};

    \taskDashed{$\mathcal{C}_3$}{0.5}{9}{1}
    \task{$\nameSeparator_{\nbTriplet-1}$}{0.75}{9.75}{1}
    \taskDashed{$\mathcal{C}_3$}{0.5}{10.25}{1}
    \task{$\setTask$}{2}{12.25}{1}
    \task{$\mathcal{C}_4$}{2}{14.25}{1}

    \task{$\mathcal{C}_1$}{2}{2}{0}
    \task{$\mathcal{C}_2$}{2}{4}{0}
    \taskDashed{$\mathcal{C}_4$}{0.5}{4.5}{0}
    \task{$\nameSeparator_{1}$}{0.75}{5.25}{0}
    \taskDashed{$\mathcal{C}_4$}{0.5}{5.75}{0}
    \task{$\nameSeparator_{2}$}{0.75}{6.5}{0}

    \node at (7.5,0.35) {$\dots$};

    \taskDashed{$\mathcal{C}_4$}{0.5}{9}{0}
    \task{$\nameSeparator_{\nbTriplet-1}$}{0.75}{9.75}{0}
    \taskDashed{$\mathcal{C}_4$}{0.5}{10.25}{0}
    \task{$\mathcal{C}_3$}{2}{12.25}{0}
    \task{$\setTask$}{2}{14.25}{0}
        
    \end{tikzpicture}
    \caption{Preferences of the voters in the four voters reduction. Dashed tasks means that there are as many unit tasks as the size of a target triplet in the 3-Partition instance. Tasks from all sets are always scheduled in the same order by the voters.}
    \label{fig:pref_sketch}
\end{figure}

We finally set \targetSum to be: 
\begin{equation*}
    \begin{array}{rcl}
        \targetSum & = & 2(\nbTriplet\sumTriplet)^2+\sum_{i=0}^{q-1} i\sumTriplet\\
         & + & (\nbTriplet\sumTriplet)^2 + \sum_{i=0}^{q-1} i\sumTriplet\\
         & + & (\nbTriplet\sumTriplet)^2 + \sum_{i=0}^{q-1} i\sumTriplet\\
         & + & 2(\nbTriplet\sumTriplet)^2+\sum_{i=0}^{\nbTriplet-1} i\sumTriplet\\
         & + & 3\nbTriplet(\nbTriplet\sumTriplet+\nbTriplet-1+2\nbTriplet\sumTriplet)
    \end{array}
\end{equation*}

Intuitively, each of these rows corresponds to the deviation obtained when one set of tasks is scheduled in their median completion interval. Consider the set $\mathcal{C}_1$, if a task from $\mathcal{C}_1$ is scheduled at its median completion time, it will be on time for the last 3 voters but early for the first voter. Specifically, the very first task of $\mathcal{C}_1$ will be early by 2 times $\nbTriplet\sumTriplet$, because the sets \setTask and $\mathcal{C}_2$ are scheduled before the tasks of $\mathcal{C}_1$ in the preference of the first voter. Now, note that the first $\sumTriplet$ tasks are not delayed more in the preference of the first voter. However, groups of $\sumTriplet$ tasks are delayed by the separator tasks, i.e., the second group by $1$ unit of time, the third by 2 units, and so on. This gives us $2(\nbTriplet\sumTriplet)^2$, which is $2(\nbTriplet\sumTriplet)$ for each of the $(\nbTriplet\sumTriplet)$ tasks plus $\sum_{i=0}^{q-1} i\sumTriplet$ as the first group is delayed by 0 separator tasks, the second by one and so on until the last group of \sumTriplet which is delayed by $\nbTriplet-1$ units of time. The same reasoning holds symmetrically for $\mathcal{C}_4$.

The same idea can be applied to $\mathcal{C}_2$ but each task is only early by $(\nbTriplet\sumTriplet)$ instead of twice that for $\mathcal{C}_1$. Once again, this holds symmetrically for $\mathcal{C}_3$.

The separator tasks are scheduled exactly at the same time in all preferences, therefore if they are also scheduled at this time in a solution, their deviation is exactly 0.

Finally, for the tasks of \setTask, we start by noting that, if a task is scheduled within the median completion time interval, then it is late for the first voter by exactly the same amount of time than for the second voter, plus $\nbTriplet\sumTriplet$. The same observation holds symmetrically for the third and fourth voter. Now consider the first task of the set, if it is scheduled at the very beginning of the interval, it is on time for the second, and early by $\nbTriplet\sumTriplet+\nbTriplet-1$ for the third voter, this gives us a total of $3\nbTriplet(\nbTriplet\sumTriplet+\nbTriplet-1+2\nbTriplet\sumTriplet)$ if all the $3\nbTriplet$ tasks of \setTask are in their median completion interval.\\

Note that for such a schedule to exists, one needs to schedule each task within its median completion time interval. 
Each task from one of the $\mathcal{C}$ sets has a precise median completion time: Any deviation from this time strictly increases the overall deviation. The same holds for the separator tasks, which are scheduled at the same time by all voters. Tasks from $\setTask$, however, are scheduled at different time across voters. Since there are four voters, 
instead of having a single median completion time, a task may have a median completion interval, meaning that half of the voters schedule it to be completed at or after the end of the interval, while the other half schedule it at or before the beginning of the interval. 
This situation occurs here, where tasks from $\setTask$ must be scheduled between the tasks of $\mathcal{C}_2$ and $\mathcal{C}_3$ to minimize the deviation. 
As the separator tasks must be scheduled exactly at their specified completion times in the profile, the only way to schedule the tasks from $\setTask$ without shifting any other task is to form triplets whose total processing time equals the target triplet sum in the \textsc{3-Partition} instance, and to place these triplets between the separator tasks. 
Therefore, the existence of a schedule with a total deviation of $\targetSum$ or less implies the existence of a solution to the \textsc{3-Partition} instance.
One can also simply create a solution for the reduced instance from a solution for the \textsc{3-Partition} instance by putting the tasks corresponding to each triplet between the separator tasks and the rest of the tasks at their median completion time.
Therefore, there exists a schedule in which each task is scheduled at its median completion time (or within its median completion interval) if and only if the corresponding \textsc{3-Partition} instance is a yes-instance. Hence, the problem of finding a schedule that minimizes the total absolute deviation for four voters is NP-hard.
\end{proof}

We shortly note that if one wants to adapt this reduction to the objective function of \citeauthor{kumar2010generalized}~\cite{kumar2010generalized} for weighted candidates, it is only necessary to multiply the offset of the tasks in \setTask by their processing time. The weights do not change the fact that the deviation of a task is minimized when it is scheduled to complete within its median completion interval.

We now move on to the reduction with 3 voters.

\begin{theorem}
The \sigmaDdecprob problem is NP-hard, even when there are only 3 voters.
\label{thm:np_hard}
\end{theorem}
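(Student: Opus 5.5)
I would again reduce from \textsc{3-Partition} (strongly NP-hard, so all integers can be encoded in unary and the number of unit tasks stays polynomial). The strategy is the same as in the four-voter reduction: build an instance in which every task has a well-defined median completion time. I then set $\targetSum$ to the sum of the minimal individual deviations, so that $D(\sched,\prefProfile)\le\targetSum$ holds iff every task completes exactly at its median completion time.

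With three voters there is no median interval: each task has a single median due date. The freedom that encodes the partition must therefore come differently. It will come from the integer tasks in $\setTask$ not being able to sit at their individual medians, while \emph{sets} of them fill prescribed slots. So I will compute the lower bound differently for $\setTask$ and make the target tight.

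\textbf{Construction.} I would use three voters. Voters 1 and 2 each place a block of $\setTask$ (size $\nbTriplet\sumTriplet$) at opposite ends of the schedule. Voter 3 places only padding there, and places $\nbTriplet$ slots of length $\sumTriplet$ in the middle, separated by unit separators $\nameSeparator_1,\dots,\nameSeparator_{\nbTriplet-1}$. As before, the separators get the same completion time for all three voters, hence a forced position.

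The tasks of $\setTask$ are ordered so that in voter 3's schedule they sit inside the slot region. Voters 1 and 2 push them early and late respectively. The median due date of each $t\in\setTask$ is then voter 3's due date, or lies inside the slot region. For such a $t$, $|C_t-d_{t,1}|+|C_t-d_{t,2}|$ is constant ($=d_{t,2}-d_{t,1}$) whenever $C_t\in[d_{t,1},d_{t,2}]$.

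I would then make voter 3's contribution for $\setTask$ irrelevant to within slack. The cleanest choice is to add padding sets $\mathcal{C}_j$ of unit tasks, ordered consistently across voters as in Figure~\ref{fig:pref_sketch}, so that each padding task has a median position two voters agree on. The padding and separators then occupy exactly all time outside the slot region at zero excess cost.

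\textbf{Lower bound and equivalence.} The bound $\targetSum$ is the sum, over padding and separators, of the deviation at their medians. For the tasks of $\setTask$ I add the constant $d_{t,2}-d_{t,1}$ plus voter 3's term, bounded via a total-work argument. If padding and separators are at their medians, the tasks of $\setTask$ must occupy exactly the gaps between separators, each of length $\sumTriplet$. This forces a 3-partition (assuming $\sumTriplet/4<x_i<\sumTriplet/2$). Conversely, a 3-partition gives a schedule meeting $\targetSum$.

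\textbf{Main obstacle.} The hard part is voter 3's contribution for tasks of $\setTask$, which now depends on their order inside the slots. It is not constant, so a cheaper schedule might shift padding or a separator to gain on $\setTask$. I would first try to neutralize this by scaling: multiply every integer and $\sumTriplet$ by a large factor $\factorK$. Any displacement of a padding block or separator then costs at least $\factorK$, which exceeds the total variation $O(\nbTriplet\sumTriplet\cdot\nbTriplet)$ of voter 3's $\setTask$-terms. I would then set $\targetSum$ to the base cost plus this maximal variation, and check that no schedule violating the slot structure falls below $\targetSum$.
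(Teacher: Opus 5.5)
There is a genuine gap, and it is in the construction itself. You have voter~3 place all of $\mathcal{J}$ inside the slot region while the separators keep the completion times they have for voters~1 and~2. Then voter~3's preferred schedule packs the integer tasks exactly into $q$ gaps of length $B$, so it \emph{is} a 3-partition. That profile cannot be written down in polynomial time, and it does not exist at all for a no-instance. The slots in voter~3's profile must therefore be filled by other (unit) tasks, and the integer tasks' medians must lie elsewhere. Once every task has its unique three-voter median, the slots are empty in the ideal assignment, and something has to be displaced into them.

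This is exactly the difficulty three voters create, and the paper resolves it with an idea your plan lacks. Voter~3 fills the slots with unit tasks $\mathcal{M}$ and puts $\mathcal{T}$ in one central interval between $A_{q/2}$ and $A_{q/2+1}$. Voters~1 and~2 put $\mathcal{M}$ in that same central interval. So $\mathcal{M}$ and $\mathcal{T}$ compete for the centre. Under $\Sigma$D, moving a task by $\delta$ away from its median costs $\delta$ (two voters worse, one better) regardless of its length. Hence pushing the long $\mathcal{T}$ tasks out into the slots is far cheaper than pushing out the same total length of unit tasks. The target is then $D_{NF}$ plus an explicitly computed upper bound on the cost of routing a 3-partition's triplets into their slots; it is not a sum of individually attainable minima.

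Your scaling step also fails as stated. Multiplying all integers and $B$ by $K$ multiplies the voter-3 variation of the $\mathcal{J}$ terms by $K$ as well: it becomes $\Theta(q^2BK)$, and it also grows with separator length because triplets must cross separators. So a unit separator displaced by $K$, costing $3K$, does not dominate it. The paper needs separators that are blocks of $O$ unit tasks, with $O$ exceeding the $B'$-part of the slack and $K\ge 3q^2+6q$, so that $Z< D_{NF}+2KO$.

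Finally, your ``only if'' direction assumes the padding sits at its medians. A cheap schedule can instead fill the shortfall of a non-triplet slot with a few unit tasks rather than shift a separator. The paper must rule this out in cases (2), (3.1) and (3.2). It can do so only after structural lemmas showing that:
\begin{itemize}
\item separator sets are contiguous and within $B'/4$ of their positions;
\item $A_0$ and $A_{q+2}$ are exactly on time;
\item $\mathcal{L}$ and $\mathcal{R}$ sit at the two ends.
\end{itemize}
None of this appears in your argument.
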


\begin{proof}

\smallskip

We describe a polynomial time reduction from the NP-complete \textsc{3-Partition} problem, which is defined as follows:\\
\noindent\decprob{3-Partition}{A set $\setInt=\{\nameInt_1,\dots,\nameInt_{3\nbTriplet}\}$ of $3\nbTriplet$ integers, an integer \sumTriplet.}{Is there a partition of \setInt\ into \nbTriplet\ triplets $\nameTriplet_1$ to $\nameTriplet_\nbTriplet$ such that $\forall i \in [\nbTriplet]$, the sum of the integers in $\nameTriplet_i$ is exactly \sumTriplet?}\\

\smallskip

This problem is NP-complete even when all integers \( x_i \) are strictly larger than \( \sumTriplet / 4 \) and strictly smaller than \( \sumTriplet / 2 \), and even when the value of \( \sumTriplet \) is polynomially bounded by the number of triplets \( \nbTriplet \)~\cite{garey2002computers}.

We consider an instance of \textsc{3-Partition} that satisfies these assumptions and such that \( \nbTriplet \) is even. Note that the problem remains NP-complete under this assumption. Indeed, given an instance \( I \) with an odd number of triplets, one can construct a new instance \( I' \) by adding three integers with values \( \sumTriplet / 2 - 2 \), \( \sumTriplet / 4 + 1 \), and \( \sumTriplet / 4 + 1 \). In any feasible solution for \( I' \), the integer of value \( \sumTriplet / 2 - 2 \) must necessarily be grouped with the two integers of value \( \sumTriplet / 4 + 1 \), since any other pairing would result in a sum exceeding \( \sumTriplet \). Therefore, \( I' \) is a yes-instance if and only if \( I \) is a yes-instance.

We also assume that \( \sumTriplet \geq 8 \); if not, we multiply the value of \( \sumTriplet \) and all integers by 8 (this clearly does not affect the yes/no status of the instance).\\

Throughout the proof, we provide intuitions in \intuition{italic}. These are not formal descriptions of the reduced instance or the proof, but are intended solely to guide the reader through the reduction.\\

Before describing the reduction, we define 
\[
\factorK = 4\left\lceil\left(\frac{3\nbTriplet^2}{4} + \frac{3\nbTriplet}{2}\right)\right\rceil 
\quad \text{and} \quad 
\sumTripletNew = \sumTriplet \cdot \factorK.
\]

\intuition{The factor \factorK will be used to increase the size of the tasks relative to the original integers. In order to achieve a total deviation of at most \targetSum, a schedule will need to form triplets of tasks summing exactly to \sumTripletNew. Multiplying the values by \factorK significantly penalizes any triplet of tasks that does not sum precisely to \sumTripletNew, introducing an offset of at least \factorK instead of just 1.}\\

\smallskip
We now describe the reduced instance. The set \setTask contains the following subsets of tasks:

\begin{itemize}
    \item A set \(\setIntTask = \{\nameTask_1, \dots, \nameTask_{3\nbTriplet}\}\), such that for all \(i \in [3\nbTriplet]\), the processing time of \(\nameTask_i\) is equal to \(\factorK \cdot \nameInt_i\).
    
    \item A set \(\setLeft = \{\nameLeft_1, \dots, \nameLeft_{\nbTriplet\sumTripletNew}\}\) consisting of \(\nbTriplet\sumTripletNew\) tasks, each with processing time 1.
    
    \item A set \(\setRight = \{\nameRight_1, \dots, \nameRight_{\nbTriplet\sumTripletNew}\}\), also with \(\nbTriplet\sumTripletNew\) tasks of processing time 1.
    
    \item A set \(\setCentral = \{\nameCentral_1, \dots, \nameCentral_{\nbTriplet\sumTripletNew}\}\), with \(\nbTriplet\sumTripletNew\) tasks of processing time 1.
    
    \item A total of \(\nbTriplet + 3\) “separator” sets, denoted \(\setSeparator{0}\) through \(\setSeparator{\nbTriplet+2}\), defined as follows:
    \begin{itemize}
        \item \(\nbTriplet\) “small” separator sets, from \(\setSeparator{1}\) to \(\setSeparator{\nbTriplet}\), each containing 
        \[
        \sizeSeparator = 2\left\lceil \frac{51\nbTriplet^2\sumTriplet}{16} + \frac{\nbTriplet\sumTriplet}{8} \right\rceil
        \]
        tasks of processing time 1.

  \item Two ``large" separator sets \setSeparator{0} and \setSeparator{\nbTriplet+2}, each contains $\sizeLargeSeparator=3(\nbTriplet\sizeSeparator+2\nbTriplet\sumTripletNew)$ tasks of processing time 1,

        
    \end{itemize}
    
    We denote by \(\nameSeparator^i_j\) the \(j^\text{th}\) task from the \(i^\text{th}\) separator set.
\end{itemize}

We now describe the preferences of the three voters. 
 For the sake of readability, 
when tasks from a given set are scheduled consecutively and in non-decreasing order of their indices within the set, we represent them using a single large block. Dashed blocks indicate that only a subset of the set is included (specifically, exactly \(\sumTripletNew\) tasks). Tasks from each set are always scheduled by non-decreasing index. For example, the tasks scheduled by voter 1 between $\nbTriplet\sumTripletNew+\sizeLargeSeparator$ and $\nbTriplet\sumTripletNew+\sizeLargeSeparator+\sumTripletNew$ are the first $\sumTripletNew$ tasks of the set \setLeft.

\begin{figure}[h]
    \begin{tikzpicture}[scale=1.1]

    \hspace{-5em}

    \task{\setIntTask}{1}{1}{2}
    \task{\setSeparator{0}}{1.5}{2.5}{2}
    \taskDashed{\setLeft}{0.5}{3}{2}
    \task{$\setSeparator{1}$}{0.75}{3.75}{2}
    \taskDashed{\setLeft}{0.5}{4.25}{2}
    \task{$\setSeparator{2}$}{0.75}{5}{2}
    \taskDashed{\setLeft}{0.5}{5.5}{2}

    \node at (5.875,2.35) {$\dots$};

    \taskDashed{\setLeft}{0.5}{6.75}{2}
    \task{$\setSeparator{\frac{\nbTriplet}{2}}$}{0.75}{7.5}{2}
    \task{\setCentral}{1}{8.5}{2}    
    \task{$\setSeparator{\frac{\nbTriplet}{2}\!+\!1\!}$}{0.75}{9.25}{2}
    \taskDashed{\setLeft}{0.5}{9.75}{2}
    
    \node at (10.125,2.35) {$\dots$};
    
    \taskDashed{\setLeft}{0.5}{11}{2}    
    \task{$\setSeparator{q}$}{0.75}{11.75}{2}
    \taskDashed{\setLeft}{0.5}{12.25}{2}
    \task{$\setSeparator{\!q\!+\!1\!}$}{0.75}{13}{2}
    \taskDashed{\setLeft}{0.5}{13.5}{2}
    \task{\setSeparator{\nbTriplet+2}}{1.5}{15}{2}
    \task{\setRight}{1}{16}{2}


    \task{\setLeft}{1}{1}{1}
    \task{\setSeparator{0}}{1.5}{2.5}{1}
    \taskDashed{\setRight}{0.5}{3}{1}
    \task{$\setSeparator{1}$}{0.75}{3.75}{1}
    \taskDashed{\setRight}{0.5}{4.25}{1}    
    \task{$\setSeparator{2}$}{0.75}{5}{1}
    \taskDashed{\setRight}{0.5}{5.5}{1}

    \node at (5.875,1.35) {$\dots$};

    \taskDashed{\setRight}{0.5}{6.75}{1}  
    \task{$\setSeparator{\frac{\nbTriplet}{2}}$}{0.75}{7.5}{1}
    \task{\setCentral}{1}{8.5}{1}
    \task{$\setSeparator{\frac{\nbTriplet}{2}\!+\!1\!}$}{0.75}{9.25}{1}
    \taskDashed{\setRight}{0.5}{9.75}{1}

    \node at (10.125,1.35) {$\dots$};

    \taskDashed{\setRight}{0.5}{11}{1}
    \task{$\setSeparator{q}$}{0.75}{11.75}{1}
    \taskDashed{\setRight}{0.5}{12.25}{1}
    \task{$\setSeparator{\!q\!+\!1\!}$}{0.75}{13}{1}
    \taskDashed{\setRight}{0.5}{13.5}{1}
    \task{\setSeparator{\nbTriplet+2}}{1.5}{15}{1}
    \task{\setIntTask}{1}{16}{1}


    \task{\setLeft}{1}{1}{0}
    \task{\setSeparator{0}}{1.5}{2.5}{0}
    \taskDashed{\setCentral}{0.5}{3}{0} 
    \task{$\setSeparator{1}$}{0.75}{3.75}{0}
    \taskDashed{\setCentral}{0.5}{4.25}{0}
    \task{$\setSeparator{2}$}{0.75}{5}{0}  
    \taskDashed{\setCentral}{0.5}{5.5}{0}

    \node at (5.875,0.35) {$\dots$};

    \taskDashed{\setCentral}{0.5}{6.75}{0}  
    \task{$\setSeparator{\frac{\nbTriplet}{2}}$}{0.75}{7.5}{0}
    \task{\setIntTask}{1}{8.5}{0}
    \task{$\setSeparator{\frac{\nbTriplet}{2}\!+\!1\!}$}{0.75}{9.25}{0}
    \taskDashed{\setCentral}{0.5}{9.75}{0}

    \node at (10.125,0.35) {$\dots$};

    \taskDashed{\setCentral}{0.5}{11}{0}
    \task{$\setSeparator{q}$}{0.75}{11.75}{0}
    \taskDashed{\setCentral}{0.5}{12.25}{0}
    \task{$\setSeparator{\!q\!+\!1\!}$}{0.75}{13}{0}
    \taskDashed{\setCentral}{0.5}{13.5}{0}
    \task{\setSeparator{\nbTriplet+2}}{1.5}{15}{0}
    \task{\setRight}{1}{16}{0}

    \draw[->] (-0.2,0) -- (16.1,0);
    
    \draw (0,0.1) -- (0,-0.1) node[below]{0};
    \draw (1,0.1) -- (1,-0.1) node[below]{$\nbTriplet\sumTripletNew$};
    \draw (2.5,0.1) -- (2.5,-0.5) node[below left=-0.1]{$\nbTriplet\sumTripletNew+\sizeLargeSeparator$};
    \draw (3,0.1) -- (3,-1) node[below]{$\nbTriplet\sumTripletNew+\sizeLargeSeparator+\sumTripletNew$};
    
    \end{tikzpicture}

    \caption{High-level view of the preferred schedules of the three voters. }
    \label{fig:highlevel_profilethreevoters}
\end{figure}
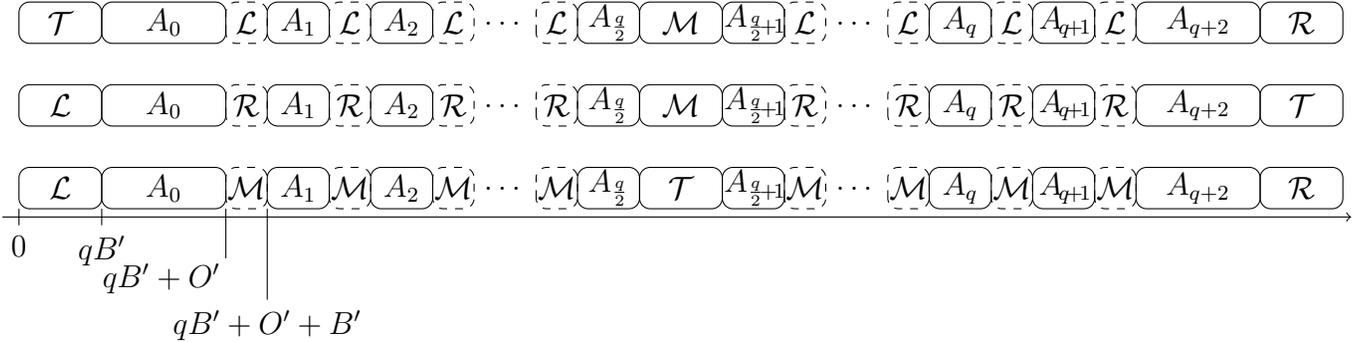

\intuition{The tasks in the separator sets are scheduled at the same positions by the three voters. 
 In any schedule that meets the required objective value \(\targetSum\), these tasks must be placed precisely at those positions, leaving the time intervals in between to be filled by tasks from either \(\setLeft\), \(\setRight\), \(\setCentral\), or \(\setIntTask\). Due to the placement of tasks of \(\setLeft\) and \(\setRight\), we will show that the tasks from these sets must be scheduled either at the very beginning or at the very end of the schedule. This leaves only tasks of \(\setCentral\) and \(\setIntTask\) to fill the intervals between the separators.}\\
 \smallskip

We complete the description of the reduced instance by setting \targetSum.
\begin{equation}
\begin{array}{rl}
    \targetSum = & 2\left(\sum_{i=0}^{\nbTriplet/2-1} \sumTripletNew(\nbTriplet\sumTripletNew+\sizeLargeSeparator+i\sizeSeparator)+\sum_{i=0}^{\nbTriplet/2-1} \sumTripletNew(2\nbTriplet\sumTripletNew+\sizeLargeSeparator+(\nbTriplet/2+1+i)\sizeSeparator)\right) \\
     + & 2\sum_{i=0}^{q/2-1} \sumTripletNew(\sizeSeparator+\nbTriplet\sumTripletNew/2+i\sizeSeparator)\\
     + & \sum_{i=1}^{3\nbTriplet}2(\sizeLargeSeparator+\nbTriplet/2\sizeSeparator+\sumTripletNew\nbTriplet/2+\sumTripletNew\nbTriplet)\\
     + & 51\nbTriplet^2\sumTripletNew/16-\nbTriplet\sumTripletNew/8+3\nbTriplet^2\sizeSeparator/4+3\nbTriplet\sizeSeparator/2
\end{array}
\label{eq:targetSum}
\end{equation}
Note that since that $B'= B\dot K$, and since \factorK is a multiple of 4, \nbTriplet is even (and thus $q^2$ is a multiple of 4) and since \sizeSeparator is a multiple of 2, then \targetSum is an integer. Also, as \factorK, \sizeSeparator and $B'$ are polynomially bounded by $B$ and \nbTriplet, the number of tasks as well as their processing times are all polynomially bounded by \sumTriplet and \nbTriplet. It follows that \targetSum is also polynomially bounded: The reduced instance can be obtained in polynomial time.

\smallskip 

\intuition{In Equation \ref{eq:targetSum}, the value of \targetSum is spread across four lines. The first three lines of Equation~\ref{eq:targetSum} form a lower bound on the deviation of any solution. The last line is an upper bound on the additional cost of a solution corresponding to a {\sc 3-Partition}. Any solution which does not split the tasks of \setIntTask into triplets of adequate processing times will have a larger additional cost.}

\smallskip

Throughout the proof, we say that a task from a separator set is on time in schedule \sched if it completes at the same time in \sched as in the preferences of the three voters, since they agree on the completion times of such tasks. Similarly, a task from a separator set is said to be early (resp. late) in \sched if it completes earlier (resp. later) in \sched than in the voters' preferences.  We begin by proving a couple of lemmas.

\begin{lemma}
Given a schedule \sched, one can construct a schedule $\sched'$ with a total sum of absolute deviations no greater than that of \sched, and such that: 
\begin{itemize}
    \item In $\sched'$, for every separator set \setSeparator{i}, all tasks from the set are scheduled consecutively and in non-decreasing order of their indices.
    \item in $\sched'$, each task from a separator set completes at most $\sumTripletNew/4$ units of time away from its completion time in the voters' preferences.
\end{itemize} 
\label{lem:separator}
\end{lemma}

\begin{proof}
First, tasks within any given separator set can be scheduled in non-decreasing order of their indices. Indeed, one can take any schedule that does not respect this ordering and iteratively swap tasks within the same separator set until the correct order is achieved — without increasing the total deviation. A similar exchange argument applies to the relative ordering of different separator sets: For each integer $i \in [q+1]$, tasks from \setSeparator{i} precede those from \setSeparator{i+1} in all voters' preferences. Therefore, in any optimal solution, we can assume without loss of generality that tasks from \setSeparator{i} are scheduled before those from \setSeparator{i+1}.


Secondly, consider a schedule $\sched^*$ that minimizes the total sum of absolute deviations and satisfies the ordering properties established above. Suppose that in $\sched^*$, the tasks of a given separator set $\setSeparator{i}$ are not scheduled as a contiguous block. That is, there exists a smallest index $j$ such that at least one task is scheduled between $\nameSeparator_j^i$ and $\nameSeparator_{j+1}^i$.

Since these two tasks appear consecutively in the preferences of all voters, their separation in $\sched^*$ implies a misalignment: Specifically, either $\nameSeparator_j^i$ completes earlier than in the voters' preferences (i.e., it is \emph{early}), or $\nameSeparator_{j+1}^i$ completes later (i.e., it is \emph{late}), or both. We distinguish the following three cases:
\begin{enumerate}
    \item $\nameSeparator_j^i$ is early, and $\nameSeparator_{j+1}^i$ is either early or on time;
    \item $\nameSeparator_j^i$ is either late or on time, and $\nameSeparator_{j+1}^i$ is late;
    \item $\nameSeparator_j^i$ is early, and $\nameSeparator_{j+1}^i$ is late.
\end{enumerate}


\paragraph{Case (1)} Let $t$ be the task scheduled immediately after $\nameSeparator_j^i$ in $\sched^*$. Since $j$ is the first index such that $\nameSeparator_j^i$ and $\nameSeparator_{j+1}^i$ are not scheduled consecutively, all tasks $\nameSeparator_1^i$ through $\nameSeparator_{j-1}^i$ must be scheduled directly before $\nameSeparator_j^i$ in $\sched^*$. Therefore, these tasks are also early by the same amount of time.

Let $\sched$ be the schedule obtained from $\sched^*$ by moving task $t$ to the position just before $\nameSeparator_1^i$. The total sum of absolute deviations between $\sched$ and $\sched^*$ remains unchanged for all tasks except $t$ and the tasks $\nameSeparator_1^i$ through $\nameSeparator_j^i$.

In $\sched$, the deviation of each task $\nameSeparator_1^i, \dots, \nameSeparator_j^i$ is reduced by $3$ times the processing time of $t$, i.e., by at least $3j$ in total, since there are 3 voters and $t$ has processing time at least $1$ and is moved $j$ positions. As $\nameSeparator_{j+1}^i$ is assumed to be early or on time, the tasks $\nameSeparator_1^i$ through $\nameSeparator_j^i$ in $\sched$ are either still early (but less so) or on time.

Task $t$, now completing $j$ time units earlier in $\sched$ compared to $\sched^*$, may see its deviation increase by at most $j$ per voter, i.e., at most $3j$ in total. Consequently, the total deviation in $\sched$ is no greater than in $\sched^*$, which implies that $\sched$ is also optimal.

We can repeat this swapping process until there are no tasks scheduled between $\nameSeparator_j^i$ and $\nameSeparator_{j+1}^i$, ensuring that the separator set becomes a contiguous block.


\paragraph{Case (2)} This case is similar to the first. Let $t$ be the task scheduled immediately before $\nameSeparator_{j+1}^i$ in $\sched^*$. We construct a new schedule by swapping $t$ and $\nameSeparator_{j+1}^i$.

In the new schedule, the deviation of $\nameSeparator_{j+1}^i$ is reduced by at least $3$, since it is moved one position earlier and is currently late. The deviation of task $t$ may increase by at most $3$, as it is moved one position later and each voter's deviation can increase by at most $1$ unit.

Hence, the total sum of deviations does not increase, and the resulting schedule remains optimal. We repeat this process until no task remains between $\nameSeparator_j^i$ and $\nameSeparator_{j+1}^i$, thereby bringing the two tasks adjacent in the schedule.


\paragraph{Case (3)} This case is slightly more intricate. Suppose first that there are at least two tasks scheduled between $\nameSeparator_j^i$ and $\nameSeparator_{j+1}^i$ in $\sched^*$. In that situation, we can either delay the block of tasks $\nameSeparator_1^i, \dots, \nameSeparator_j^i$, or advance $\nameSeparator_{j+1}^i$. After the swap, $\nameSeparator_j^i$ remains early or $\nameSeparator_{j+1}^i$ remains late, allowing us to apply the same reasoning as in Cases (1) and (2): Each unit of shift reduces the deviation of one or more tasks by 3, while any potential increase in deviation is at most 3. Therefore, the swap does not increase the total deviation and can be repeated until the tasks become adjacent.

The subtlety arises when exactly one task $t$ is scheduled between $\nameSeparator_j^i$ and $\nameSeparator_{j+1}^i$. In this case, swapping $t$ with either $\nameSeparator_j^i$ or $\nameSeparator_{j+1}^i$ might cause a task that was early to become late, or vice versa, potentially neutralizing the benefit of the move.

To analyze this scenario, observe that in all voters' preferences, $\nameSeparator_j^i$ completes exactly one unit of time before $\nameSeparator_{j+1}^i$. Since $t$ is the only task in between, and $\nameSeparator_j^i$ is early while $\nameSeparator_{j+1}^i$ is late, there must exist a strictly positive integer $d \leq \procTime_t$ such that:
\begin{itemize}
    \item each task $\nameSeparator_1^i, \dots, \nameSeparator_j^i$ is early by $d$ time units, and
    \item $\nameSeparator_{j+1}^i$ is late by $\procTime_t - d$ time units.
\end{itemize}

We then distinguish three subcases:
\begin{enumerate}
    \item[(3.1)] $d > \procTime_t - d$,
    \item[(3.2)] $d < \procTime_t - d$,
    \item[(3.3)] $d = \procTime_t - d$.
\end{enumerate}

These subcases correspond to different balances between the early and late deviations, which determine whether the swap is beneficial. Figure~\ref{fig:prop_separator_case_3} illustrates Case (3.2), where $\procTime_t - d > d$. By adjusting the value of $d$ in the figure, one can visualize how the swap operation affects the total deviation in each subcase.

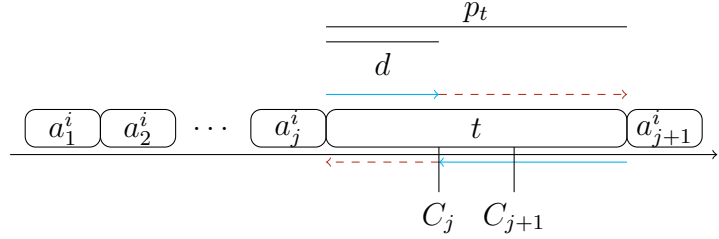
\begin{figure}[h]
    \centering
    \begin{tikzpicture}


        \task{$\nameSeparator_1^i$}{1}{1}{0}
        \task{$\nameSeparator_2^i$}{1}{2}{0}
        \node at (2.5,0.35) {$\dots$};
        \task{$\nameSeparator_j^i$}{1}{4}{0}
        \task{$t$}{4}{8}{0}
        \task{$\nameSeparator_{j+1}^i$}{1}{9}{0}

        \draw[->] (-0.2,0) -- (9.2,0);

        \draw (4,1.7) -- (8,1.7); 
        \node at (6,1.9) {$\procTime_t$};
        \draw (4,1.5) -- (5.5,1.5); 
        \node at (4.75,1.2) {$d$};

        \draw [->, cyan] (4,0.8) -- (5.5,0.8);
        \draw [->, dashed, BrickRed] (5.5,0.8) -- (8,0.8);

        \draw [->, cyan] (8,-0.1) -- (5.5,-0.1);
        \draw [->, dashed, BrickRed] (5.5,-0.1) -- (4,-0.1);
        
        \draw (5.5,0.1) -- (5.5,-0.5) node[below]{$C_j$};
        \draw (6.5,0.1) -- (6.5,-0.5) node[below]{$C_{j+1}$};
    \end{tikzpicture}
    \caption{Illustration of Case (3). $C_j$ (resp. $C_{j+1}$) denotes the completion time of $\nameSeparator_j^i$ (resp. $\nameSeparator_{j+1}^i$) in the voters' preferences. A light blue arrow indicates a reduction in deviation when the corresponding task is moved in the direction of the arrow, while a dashed dark red arrow indicates an increase in deviation.}
    \label{fig:prop_separator_case_3}
\end{figure}


Now observe that if we swap the block of tasks $\nameSeparator_1^i$ through $\nameSeparator_j^i$ with task $t$, the deviation for each task in the block is reduced by $3 \cdot d$ and increased by $3 \cdot (\procTime_t - d)$.

This yields the following conclusions:
\begin{itemize}
    \item \textbf{Case (3.1):} $d > \procTime_t - d$. The deviation of each task in the block decreases more than it increases, leading to a net reduction of at least $3j$ for the block. Task $t$ may experience an increase in deviation of at most $3j$, since it is moved by $j$ time units and there are 3 voters. Therefore, the total deviation does not increase, and the swap yields another optimal schedule.
    
    \item \textbf{Case (3.2):} $d < \procTime_t - d$. We instead swap $t$ with $\nameSeparator_{j+1}^i$. By similar reasoning, the total deviation does not increase.
    
    \item \textbf{Case (3.3):} $d = \procTime_t - d$. Swapping $t$ with either the block $\nameSeparator_1^i, \dots, \nameSeparator_j^i$, or with $\nameSeparator_{j+1}^i$, results in no net change in deviation for the involved tasks.
\end{itemize}

Finally, note that the completion times of $\nameSeparator_j^i$ and $\nameSeparator_{j+1}^i$ in the profile occur during the execution of $t$ in $\sched^*$. This means that no voter schedules $t$ to complete at exactly the same time as in $\sched^*$.

After performing the swaps described in the corresponding case, the first index $j'$, if it exists, such that there is at least one task scheduled between $\nameSeparator_{j'}^i$ and $\nameSeparator_{j'+1}^i$, must satisfy $j' > j$. This implies that by iteratively applying such swaps, we eventually obtain a schedule in which all tasks from $\setSeparator{i}$ appear consecutively and are ordered by non-decreasing index.

Now, observe that once the tasks of a separator set are scheduled as a contiguous block and in the correct order, they must all be either early by the same amount of time, late by the same amount, or on time. Suppose they are all early by $x$ time units. Consider the task $t$ scheduled immediately after the last task of the separator set. If the processing time of $t$ is strictly less than $2x$, then moving $t$ before the separator block would reduce the total deviation: Each task in the separator set would see its deviation increase by at most $p_t - x$, while $t$ would see its deviation decrease by $x$ for each voter. Thus, to prevent such an improving swap, $t$ must have processing time at least $2x + 1$.

Since the largest task in the instance has processing time at most $\sumTripletNew/2$, we conclude that $x \leq \sumTripletNew/4$. A symmetric argument shows that the tasks of the separator set cannot be late by more than $\sumTripletNew/4$ units of time either.

Therefore, starting from a given schedule $\sched^*$, it is possible to construct a schedule $\sched$ in which:
\begin{itemize}
    \item tasks from each separator set are scheduled consecutively,
    \item they are ordered by non-decreasing index within each set,
    \item each task from a separator set is either early by at most $\sumTripletNew/4$ units of time, or on time, or late by at most $\sumTripletNew/4$ units of time, and
    \item the total sum of absolute deviations in $\sched$ is no greater than in $\sched^*$.
\end{itemize}
\end{proof}

From now on, when considering an optimal solution $\sched^*$, we will systematically consider that $\sched^*$ is such that tasks from separator sets are scheduled in blocks and ``close" to, i.e., less than $\sumTripletNew/4$ units of time away from, their positions in the preferences of the voters. Using Proposition~\ref{prop:unanimity_egalite_sigmad}, we get the following observation.



\begin{obs}
One can transform any given solution $\sched$ into a schedule $\sched'$ with no greater total deviation, such that:
\begin{itemize}
    \item the tasks in the sets $\setLeft$, $\setRight$, and $\setCentral$ are ordered by non-decreasing index within each set,
    \item all tasks from $\setLeft$ are scheduled before all tasks from $\setRight$, and
    \item all tasks from $\setCentral$ are scheduled between the separator sets $\setSeparator{0}$ and $\setSeparator{\nbTriplet+2}$.
    \item  The total deviation in $\sched'$ is not larger than the total deviation in \sched.
\end{itemize}

\label{obs:unanimity}
\end{obs}

From now on, when considering an optimal solution $\sched^*$, we will systematically consider that $\sched^*$ is such that tasks from the sets \setLeft, \setRight and \setCentral are scheduled by non-decreasing index and such that all tasks from \setLeft are scheduled before all tasks from \setRight and tasks of \setCentral are scheduled after \setSeparator{0} and before \setSeparator{\nbTriplet+2}.

\begin{lemma}
Given any schedule $\sched$, there exists a schedule $\sched'$ in which all tasks from the separator sets $\setSeparator{0}$ and $\setSeparator{\nbTriplet+2}$ complete at exactly the same times as in the voters' preferences, and such that the total sum of absolute deviations in $\sched'$ is no greater than in $\sched$.
\label{lem:large_separator}
\end{lemma}

\begin{proof}
By Lemma~\ref{lem:separator}, 
tasks from \setSeparator{1} start at the earliest at time $\nbTriplet\sumTripletNew+\sizeLargeSeparator+3\sumTripletNew/4$. This means that the tasks scheduled before \setSeparator{1} have a total processing time of at least $\nbTriplet\sumTripletNew+\sizeLargeSeparator+3\sumTripletNew/4$. If we remove the total processing time of tasks from \setSeparator{0}, there remains at least  $\nbTriplet\sumTripletNew+3\sumTripletNew/4$ units of time occupied by tasks outside of \setSeparator{0}. As the tasks from \setIntTask have a sum of processing times of $\nbTriplet\sumTripletNew$ and the rest of the tasks are unit tasks, it means that there are at least $3\sumTripletNew/4$ unit tasks in the considered time interval. Let us  consider a schedule \sched in which the tasks of \setSeparator{0} are not on time. There are two cases:
\begin{enumerate}

    \item[\emph{Case (1):}] \emph{The tasks of $\setSeparator{0}$ are scheduled late.}

In this case, there must be at least one unit task scheduled before $\setSeparator{0}$, since the tasks from $\setIntTask$ occupy at most $\nbTriplet\sumTripletNew$ units of time. Let $j$ be the last unit task scheduled before $\setSeparator{0}$. Consider the schedule obtained by moving task $j$ immediately after the tasks of $\setSeparator{0}$.

This move may increase the deviation of task $j$ by at most $\nbTriplet\sumTripletNew + \sizeLargeSeparator + \tfrac{3}{4} \sumTripletNew$, as it is delayed by at most that many units. Additionally, all tasks from $\setIntTask$ that were scheduled between $j$ and $\setSeparator{0}$ are now completed one unit earlier.
Since this is unsatisfactory only for the last two voters, their deviation increases by at most 2 units per task. There are at most $3\nbTriplet$ such tasks, leading to a total increase in deviation of at most $2 \cdot 3\nbTriplet = 6\nbTriplet$.

However, each task in $\setSeparator{0}$ is now completed one unit earlier, reducing its deviation by 1 per voter, i.e., 3 per task. Since there are $\sizeLargeSeparator$ such tasks, the total decrease in deviation is $3 \cdot \sizeLargeSeparator$, which is strictly greater than the total increase from moving $j$. Hence, the swap strictly reduces the total deviation.

   


\item[\emph{Case (2):}] \emph{The tasks of $\setSeparator{0}$ are scheduled early.}

This case is illustrated in Figure~\ref{fig:prop_large_separator_case_2}. Let us consider the first task from $\setLeft$ that is scheduled after the tasks of $\setSeparator{0}$, and denote it by $\nameLeft$. Task $\nameLeft$ must be scheduled before the tasks from $\setRight$ and from $\setSeparator{\nbTriplet+2}$.

We construct a new schedule $\sched'$ from $\sched$ by moving $\nameLeft$ to a position just before the tasks of $\setSeparator{0}$. This change reduces the deviation of $\nameLeft$ by at least $\sizeSeparator$ plus the number of time units between the end of $\setSeparator{0}$ and the original completion time of $\nameLeft$.

This time gap is at most $\nbTriplet\sizeSeparator + 2\nbTriplet\sumTripletNew$, as there are at most $\nbTriplet$ small separator sets and at most $2\nbTriplet$ tasks from $\setCentral$ and $\setIntTask$. Note that this value is strictly less than $\sizeLargeSeparator$.

Each task scheduled between the end of $\setSeparator{0}$ and the original completion time of $\nameLeft$ is now delayed by one unit of time. There are at most $\sizeLargeSeparator$ such tasks, each incurring a deviation increase of at most 3 units (one per voter), for a total increase of at most $3(\nbTriplet\sizeSeparator + 2\nbTriplet\sumTripletNew) = \sizeLargeSeparator$.

Since each task in $\setSeparator{0}$ now completes one unit later, i.e., closer to its target completion time, the deviation is reduced by 3 per task, yielding a total reduction of $3\sizeLargeSeparator$.

Therefore, the total deviation strictly decreases after performing the swap.


    \begin{figure}[h]
    \hspace{-5em}
    \begin{tikzpicture}


        \task{$\nameSeparator_1^0$}{1}{3}{2}
        \task{$\nameSeparator_2^0$}{1}{4}{2}
        \node at (6,2.35) {$\dots$};
        \task{$\nameSeparator_{\sizeLargeSeparator}^0$}{1}{9}{2}
        \taskDashed{}{4}{13}{2}
        \task{\nameLeft}{1}{14}{2}

        \draw (3,1.45) -- (10,1.45);
        \draw (10,1.45) -- (14,1.45);
        \draw[cyan] (13.5,2) -- (13.5,1.5);
        \draw[cyan] (13.5,1.5) -- (2.5,1.5);
        \draw[->,cyan] (2.5,1.5) -- (2.5,1);
        \draw (3,1.4) -- (3,1.6);
        \draw (10,1.4) -- (10,1.6);
        \node at (6.5,1.2) {$\sizeLargeSeparator$};
        
        \draw (14,1.4) -- (14,1.6);
        \node at (12,1.2) {$\leq\sizeLargeSeparator$};

        \task{\nameLeft}{1}{3}{0}
        \task{$\nameSeparator_1^0$}{1}{4}{0}
        \task{$\nameSeparator_2^0$}{1}{5}{0}
        \node at (7,0.35) {$\dots$};
        \task{$\nameSeparator_{\sizeLargeSeparator}^0$}{1}{10}{0}
        \taskDashed{}{4}{14}{0}

        \draw[->] (-0.2,0) -- (15.2,0);

        \draw [->, cyan] (3,-0.15) -- (4,-0.15) node[below left=0.5em] {\color{cyan} -3};
        \draw [->, cyan] (4,-0.15) -- (5,-0.15) node[below left=0.5em] {\color{cyan} -3};
        \draw [->, cyan] (5,-0.15) -- (6,-0.15) node[below left=0.5em] {\color{cyan} -3};
        \draw [->, cyan] (9,-0.15) -- (10,-0.15) node[below left=0.5em] {\color{cyan} -3};
        
        \draw [->, dashed, BrickRed] (10,-0.15) -- (11,-0.15) node[below left=0.5em] {\color{BrickRed} $\leq 3$};
        \draw [->, dashed, BrickRed] (11,-0.15) -- (12,-0.15) node[below left=0.5em] {\color{BrickRed} $\leq 3$};
        \draw [->, dashed, BrickRed] (12,-0.15) -- (13,-0.15) node[below left=0.5em] {\color{BrickRed} $\leq 3$};
        \draw [->, dashed, BrickRed] (13,-0.15) -- (14,-0.15) node[below left=0.5em] {\color{BrickRed} $\leq 3$};
        
        \draw (4,0.1) -- (4,-0.1) node[below]{$C_0$};
    \end{tikzpicture}
    \caption{Representation of Case (2). $C_0$ denotes the completion time of $\nameSeparator_1^0$ in the voter's preferences.  A light blue (resp. dashed dark red) arrow indicates that the deviation decreases (resp. increases) when the task is moved in the direction of the arrow, and the light blue (resp. dark red) value below it indicates the corresponding magnitude of change.}

    \label{fig:prop_large_separator_case_2}
\end{figure}
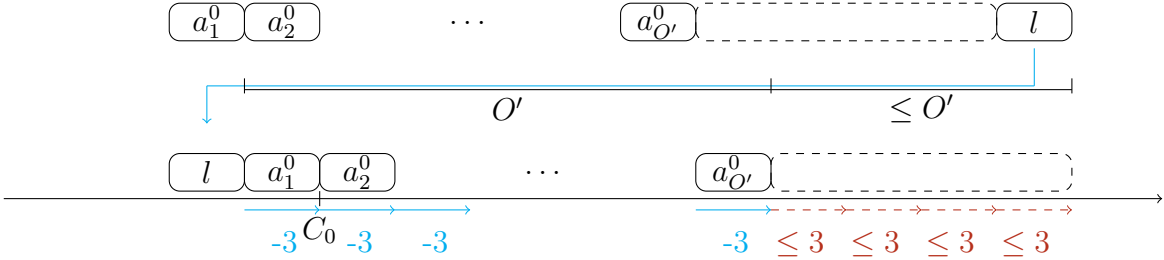
\end{enumerate}
The proof is symmetrical for \setSeparator{\nbTriplet+2}.
\end{proof}

From now on, when considering an optimal schedule $\sched^*$, we will consider that $\sched^*$ is such that tasks from \setSeparator{0} and \setSeparator{\nbTriplet+2} are on time.

\begin{lemma}
From a given optimal schedule $\sched^*$, one can obtain an optimal schedule $\sched'$ such that tasks from \setLeft are scheduled at the beginning of $\sched$ and tasks from \setRight are scheduled at the end of \sched. 
\label{lem:left_right}
\end{lemma}

\begin{proof}
Let us consider an optimal solution $\sched^*$ which fulfills all conditions mentioned above. 


Let us consider that in $\sched^*$ at least one task from \setLeft is scheduled after \setSeparator{0}. As no task except tasks from \setIntTask and \setLeft can be scheduled before \setSeparator{0} and as tasks from \setSeparator{0} are on time, it means that there is at least one task from \setIntTask scheduled before \setSeparator{0}. As tasks from \setSeparator{0} are on time and as the total processing time of tasks in \setLeft on the one hand, and of \setIntTask on the other hand, is precisely $\nbTriplet\sumTripletNew$, the tasks from \setIntTask scheduled before \setSeparator{0} have a total processing time equal to the number of tasks from \setLeft that are scheduled after \setSeparator{0}.

First let us assume that the tasks scheduled before \setSeparator{0} are such that all tasks from \setLeft are scheduled first and in non-decreasing order of their indices, as if a task from \setLeft is scheduled after either a task from \setIntTask or a task from \setLeft with a smaller index, then we can swap them without increasing the deviation.

We will perform the following swap from $\sched^*$ to obtain a new schedule \sched. Let $j$ be the last task from \setIntTask scheduled before \setSeparator{0}. We move $j$ just after \setSeparator{0}. We also move the first $\procTime_j$ tasks from \setLeft which were scheduled after  \setSeparator{0} and which are now scheduled before \setSeparator{0} in the time interval which was occupied by $j$ in $\sched^*$. We will now show that such a swap lowers the deviation.

First observe that tasks from \setLeft are scheduled before tasks from \setRight and \setSeparator{\nbTriplet+2}. This means that there are at most $\nbTriplet\sizeSeparator+2\nbTriplet\sumTripletNew$ tasks scheduled in between \setSeparator{0} and the last task from \setLeft. This implies that there are at most $\nbTriplet\sizeSeparator+2\nbTriplet\sumTripletNew$ tasks which are delayed by at most $\procTime_j$ units of time, causing a potential increase in the deviation of at most $3\procTime_j \cdot (\nbTriplet\sizeSeparator+2\nbTriplet\sumTripletNew)$. 
On the other hand, $\procTime_j$ tasks from \setLeft have a deviation reduced by at least \sizeLargeSeparator each, which lowers the deviation by at least $\procTime_j \cdot\sizeLargeSeparator$, which is precisely $3\procTime_j \cdot (\nbTriplet\sizeSeparator+2\nbTriplet\sumTripletNew)$.

The proof for \setRight and \setSeparator{\nbTriplet+2} is symmetrical.
\end{proof}

Lemmas~\ref{lem:separator}, \ref{lem:large_separator}, and~\ref{lem:left_right} and Observation~\ref{obs:unanimity} imply the existence of an optimal solution with the following structure: Tasks from \setLeft, followed by tasks of \setSeparator{0}, then tasks from standard separator sets (\setSeparator{1}, \dots , \setSeparator{q+1}), which are scheduled close to their completion times in the preferences. Between them, there are tasks of \setIntTask. The schedule ends with tasks from \setSeparator{\nbTriplet+2} followed by tasks from \setRight. The shape of such a schedule is represented in Figure~\ref{fig:optsol}.

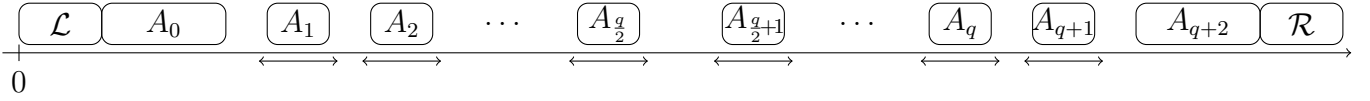
\begin{figure}[h]
    
    \begin{tikzpicture}[scale=1.1]

\hspace{-5em}

    \task{\setLeft}{1}{1}{0}
    \task{\setSeparator{0}}{1.5}{2.5}{0}
    \task{$\setSeparator{1}$}{0.75}{3.75}{0}
    \task{$\setSeparator{2}$}{0.75}{5}{0}

    \node at (5.875,0.35) {$\dots$};

    \task{$\setSeparator{\frac{\nbTriplet}{2}}$}{0.75}{7.5}{0}
    \task{$\setSeparator{\frac{\nbTriplet}{2}\!+\!1\!}$}{0.75}{9.25}{0}

    \node at (10.1625,0.35) {$\dots$};

    \task{$\setSeparator{q}$}{0.75}{11.75}{0}
    \task{$\setSeparator{q+1}$}{0.75}{13}{0}
    \task{\setSeparator{\nbTriplet+2}}{1.5}{15}{0}
    \task{\setRight}{1}{16}{0}

    \draw[->] (-0.2,0) -- (16.1,0);
    
    \draw (0,0.1) -- (0,-0.1) node[below]{0};
    \draw[<->] (2.9,-0.1) --(3.85,-0.1);
    \draw[<->] (4.15,-0.1) --(5.1,-0.1);
    \draw[<->] (6.65,-0.1) --(7.6,-0.1);
    \draw[<->] (8.4,-0.1) --(9.35,-0.1);
    \draw[<->] (10.9,-0.1) --(11.85,-0.1);
    \draw[<->] (12.15,-0.1) --(13.1,-0.1);
    
    \end{tikzpicture}

    \caption{Structure of an existing optimal solution. Arrows indicate that the separator sets, from \setSeparator{1} to \setSeparator{\nbTriplet+1}, complete close to their completion times in the voters' preferences.}
    \label{fig:optsol}
\end{figure}


We now look at the deviation caused by tasks which position is fixed in such a solution. First note that tasks from \setSeparator{0} and \setSeparator{\nbTriplet+2} are scheduled precisely at the same time than in the preferences: Their deviation is then always 0 in an optimal solution. Tasks from \setRight and \setLeft are scheduled at their median completion time. Task $\nameLeft_1$ completes in the preference of voter one at time $\nbTriplet\sumTripletNew+\sizeLargeSeparator+1$ and at time $1$ in an optimal schedule: This is a deviation of $\nbTriplet\sumTripletNew+\sizeLargeSeparator$. 
Note that the deviation is the same for all tasks up to $\nameLeft_{\sumTripletNew}$; beyond this point, an additional offset of \sizeSeparator is added each time the tasks pass a separator set. 
Overall, the deviation caused by the tasks of \setLeft when scheduled at the beginning of an optimal solution, denoted by $D_{\setLeft}$, is:

$$
D_{\setLeft}=\sum_{i=0}^{\nbTriplet/2-1} \sumTripletNew(\nbTriplet\sumTripletNew+\sizeLargeSeparator+i\sizeSeparator)+\sum_{i=0}^{\nbTriplet/2-1} \sumTripletNew(2\nbTriplet\sumTripletNew+\sizeLargeSeparator+(\nbTriplet/2+1+i)\sizeSeparator)
$$



As the sets \setLeft and \setRight are constructed and scheduled symmetrically, the corresponding cost, $D_{\setRight}$, is therefore equal to $D_{\setLeft}$. 

Now, observe that tasks achieve a minimum deviation when they are scheduled to complete exactly at their median completion time. 
We next derive a lower bound on the deviation of any schedule by computing the deviation of an (infeasible) schedule in which every task completes exactly at its median completion time. 
Such a schedule is illustrated in Figure~\ref{fig:nfsol}.

\begin{figure}[h]
    
    \begin{tikzpicture}[scale=1.1]

\hspace{-5em}

    \task{\setLeft}{1}{1}{0}
    \task{\setSeparator{0}}{1.5}{2.5}{0}
    \task{$\setSeparator{1}$}{0.75}{3.75}{0}
    \task{$\setSeparator{2}$}{0.75}{5}{0}

    \node at (5.875,0.35) {$\dots$};

    \task{$\setSeparator{\frac{\nbTriplet}{2}}$}{0.75}{7.5}{0}
    \task{\setCentral}{1}{8.5}{-0.25}
    \task{\setIntTask}{1}{8.5}{0.25} 
    \task{$\setSeparator{\frac{\nbTriplet}{2}\!+\!1\!}$}{0.75}{9.25}{0}

    \node at (10.125,0.35) {$\dots$};

    \task{$\setSeparator{q}$}{0.75}{11.75}{0}
    \task{$\setSeparator{q+1}$}{0.75}{13}{0}
    \task{\setSeparator{\nbTriplet+2}}{1.5}{15}{0}
    \task{\setRight}{1}{16}{0}

    \draw[->] (-0.2,-0.25) -- (16.1,-0.25);
    
    \draw (0,-0.15) -- (0,-0.35) node[below]{0};
    
    \end{tikzpicture}

    \caption{Infeasible ideal schedule. }
    \label{fig:nfsol}
\end{figure}
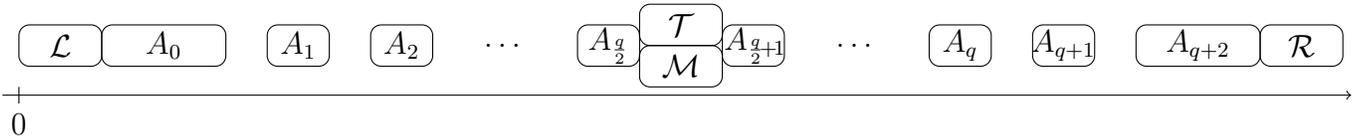

In such a schedule, every task from a separator set completes at the same time as in the voters' preferences; their deviation is thus 0. 
This leaves the tasks from \setCentral and \setIntTask, which are scheduled between  \setSeparator{\nbTriplet/2} and \setSeparator{\nbTriplet/2+1}. 
We begin by computing the cost of scheduling the tasks from \setCentral at their median completion time, denoted by $D_{\setCentral}$:

$$
D_{\setCentral}=2\sum_{i=0}^{q/2-1} \sumTripletNew(\sizeSeparator+\nbTriplet\sumTripletNew/2+i\sizeSeparator)
$$

\intuition{
We provide the intuition for the first half of the schedule, noting that the same reasoning applies symmetrically to the second half. 
We decompose the formula by first examining each task individually. The distance between the median completion time of a task and its completion time in the preferences of voter~3 can be expressed in terms of the number of separators and additional time units. 
Consider a schedule in which all tasks from \setCentral are moved before  \setSeparator{\nbTriplet/2} (temporarily ignoring other separators).  
Each task is then shifted by exactly \sizeSeparator plus $\nbTriplet\sumTripletNew/2$  time units. 
Intuitively, if such an operation is performed iteratively, starting with the last task, each task from \setCentral must be moved before every other task in the set. 
Now, in such a schedule, insert a separator set after each block of \sumTripletNew tasks,  so that the tasks are divided into $\nbTriplet/2$ sets of size \sumTripletNew. 
The last set has no additional offset, the second-to-last has an offset of  \sizeSeparator, and so on, up to the first set, which has an offset of $\sizeSeparator(\nbTriplet/2 - 1)$, thereby yielding the formula above.
} 

\smallskip 

We now give the cost of scheduling tasks from \setIntTask at their median completion time, denoted by $D_{\setIntTask}$:

$$
D_{\setIntTask}=\sum_{i=1}^{3\nbTriplet}2(\sizeLargeSeparator+\nbTriplet/2(\sizeSeparator+\sumTripletNew)+\nbTriplet\sumTripletNew)
$$

\intuition{As the tasks from \setIntTask are always scheduled in the same order, the difference between their completion times in the schedules of the third and first voters remains constant across all tasks.  
Specifically, it is equal to $3\nbTriplet$ times the difference between the starting times of the tasks from \setIntTask in the voters' preferences.}

\smallskip 
\begin{figure}[h]
    
    \begin{tikzpicture}[scale=1.1]

\hspace{-5em}

    \task{\setSeparator{0}}{1.5}{2.5}{0}
    \task{$\setSeparator{1}$}{0.75}{3.75}{0}
    \task{$\setSeparator{2}$}{0.75}{5}{0}

    \node at (5.875,0.35) {$\dots$};

    \task{$\setSeparator{\frac{\nbTriplet}{2}}$}{0.75}{7.75}{0}
    \task{\setIntTask}{1}{8.75}{0}
    \task{$\setSeparator{\frac{\nbTriplet}{2}\!+\!1\!}$}{0.75}{9.5}{0}

    \node at (10.125,0.35) {$\dots$};

    \task{$\setSeparator{q}$}{0.75}{11.75}{0}
    \task{$\setSeparator{q+1}$}{0.75}{13}{0}
    \task{\setSeparator{\nbTriplet+2}}{1.5}{15}{0}

    \draw[->] (-0.2,-0.5) -- (16.1,-0.5);
    
    \draw (0,-0.4) -- (0,-0.6) node[below]{0};

    \draw[->, BrickRed] (0,0.7) -- (7.75,0.7); 

    \node at (0.5,1.1) {$\nbTriplet\sumTripletNew$};
    \draw (1,0.8) -- (1,0.6);
    \node at (1.75,1.1) {$\sizeLargeSeparator$};
    \draw (2.5,0.8) -- (2.5,0.6);
    \node at (5,1.1) {$\nbTriplet/2 (\sizeSeparator+\sumTripletNew)$};
    
    \draw[->, BrickRed] (15,0) -- (7.75,0); 

    \node at (8.25,-0.3) {$\nbTriplet\sumTripletNew$};
    \draw (8.75,0.1) -- (8.75,-0.1);
    \node at (11.25,-0.3) {$\nbTriplet/2 (\sizeSeparator+\sumTripletNew)$};
    \draw (13.5,0.1) -- (13.5,-0.1);
    \node at (14.25,-0.3) {$\sizeLargeSeparator$};

    \end{tikzpicture}

    \caption{Representation of the deviation of integer tasks when scheduled at their median completion time.}
    \label{fig:median_int_task}
\end{figure}
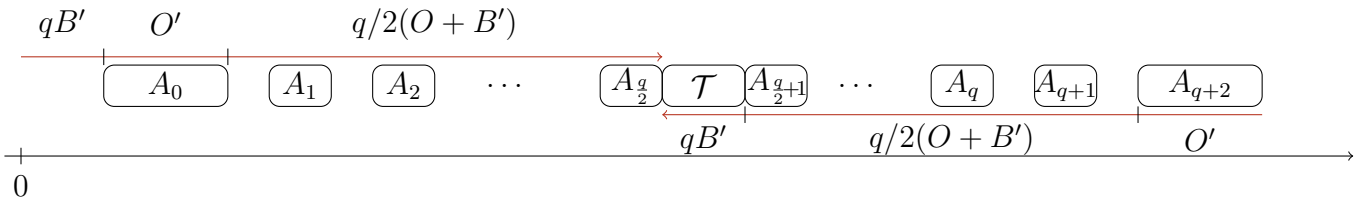

Let us denote by $D_{NF}$  the cost of the infeasible schedule in which each task completes exactly at its median completion time. We have: $D_{NF} = D_{\setLeft} + D_{\setRight} + D_{\setCentral} + D_{\setIntTask}$


Note that if a task from \setCentral or \setIntTask deviates from its median completion time, then the additional deviation (compared to being scheduled at its median completion time) is equal to one times the difference, as one of the three voters becomes more satisfied while two become dissatisfied by such a change. 
This property holds for \setCentral because tasks from \setCentral are scheduled in non-decreasing order of their indices in an optimal solution, and since the tasks from \setLeft, \setRight, \setSeparator{0}, and \setSeparator{\nbTriplet+2} are fixed in position, no task from \setCentral can be scheduled beyond its completion time in the preference of voter~3 (earlier for the first half or later for the second half). 
On the other hand, if a task from a separator set deviates from its median completion time, the additional deviation is three times the difference.

\bigskip

We will now show that the reduced instance of \sigmaDdecprob is a yes-instance if and only if the \textsc{3-Partition} instance is a yes-instance.

\smallskip

\begin{itemize}
    \item  Let us first show that the reduced instance of \sigmaDdecprob is a yes-instance if the \textsc{3-Partition} instance is a yes-instance. 
    
    We assume that the \textsc{3-Partition} instance is a yes-instance: There exists a partition of the integers of \setInt into \nbTriplet triplets $\nameTriplet_1$ to $\nameTriplet_\nbTriplet$. 
We schedule the tasks corresponding to the integers in each triplet within the separator sets. Specifically, if $x_i$, $x_j$, and $x_k$ are elements of triplet $\nameTriplet_b$, then the corresponding tasks $\nameTask_i$, $\nameTask_j$, and $\nameTask_k$ are scheduled together as a group. For the first half of the schedule — that is, for $1 \leq b \leq \nbTriplet/2$ — we place the tasks corresponding to the integers of triplet $\nameTriplet_b$ between \setSeparator{b-1} and \setSeparator{b}, ordered by non-increasing processing times. For the second half of the schedule — that is, for $\nbTriplet/2 + 1 \leq b \leq \nbTriplet$ — we schedule the tasks corresponding to the integers of triplet $\nameTriplet_b$ between \setSeparator{b} and \setSeparator{b+1}, in non-decreasing order of processing time. In other words, the task with the smallest processing time is scheduled first, followed by the second smallest, and finally the largest.

Now observe that, since each triplet sums exactly to $\sumTriplet$, the total processing time of the tasks scheduled between any pair of separators is precisely $\sumTripletNew$. This implies that every task belonging to a separator set is completed on time, and thus its deviation is exactly zero.
Next, we provide an upper bound on the deviation of the tasks in $\setIntTask$. We focus on the first half of the schedule, as the same reasoning applies symmetrically to the second half. In the worst case, all tasks corresponding to integers in the triplets of the first half have their median completion time located in the second half of the central time interval. We now divide this distance into subparts:

\begin{itemize}
    \item Firstly, all these tasks must be moved past \setSeparator{\nbTriplet/2}, that is, by \sizeSeparator{} units of time.
    
    \item Secondly, they must also be moved beyond the first half of the central time interval, i.e., by $\nbTriplet\sumTripletNew/2$ additional units of time.
    
    \item In the worst case, the smaller the task, the later it is scheduled by the third voter within the central time interval. Consequently, the smallest task must cross the entire second half of the interval, i.e., $\nbTriplet\sumTripletNew/2$ units of time; the second smallest task must cross the same distance minus the processing time of the smallest one, that is, at most $\nbTriplet\sumTripletNew/2 - \sumTripletNew/4$, and so on. In total, this contributes
    \[
        \sum_{i=0}^{3\nbTriplet/2 - 1} \left(\nbTriplet\frac{\sumTripletNew}{2} - i\frac{\sumTripletNew}{4}\right).
    \]
    
    \item Then, tasks are moved in groups of three until they reach the time interval corresponding to their respective triplets. For instance, one triplet has already reached its interval (between \setSeparator{\nbTriplet/2 - 1} and \setSeparator{\nbTriplet/2}), the next triplet must move by $\sumTripletNew + \sizeSeparator$, the following one by twice this amount, and so on. In total, this adds
    \[
        3\sum_{i=0}^{\nbTriplet/2 - 1} i(\sumTripletNew + \sizeSeparator).
    \]
    
    \item Finally, once each triplet has reached its corresponding interval, the longest task must start first within the triplet. Hence, it completes at least $\sumTripletNew/3$ units of time after the beginning of the interval (if it were shorter, the triplet could not sum to $\sumTripletNew$). This corresponds to an additional distance of $2\sumTripletNew/3$ from the median completion time.
    
    The second task in the triplet completes at the earliest $2\sumTripletNew/3$ units after the start of the interval (otherwise, the last task would exceed $\sumTripletNew/3$ in length, violating the non-increasing order of processing times). This adds another $\sumTripletNew/3$ to the distance, while the final task completes exactly at the end of the interval.
    
    Therefore, each triplet contributes an additional total distance of $\sumTripletNew$.
\end{itemize}



This provides an upper bound on the cost of an existing solution when a \textsc{3-Partition} solution exists, which we denote by $D_{3\text{-}P,UB}$:

\[
D_{3\text{-}P,UB} = D_{NF} + 2\Bigg(
    \frac{3\nbTriplet}{2} \big(\sizeSeparator + \frac{\nbTriplet \sumTripletNew}{2}\big)
    + \frac{\nbTriplet}{2} \sumTripletNew
    + \sum_{i=0}^{3\nbTriplet/2 - 1} \Big(\frac{\nbTriplet \sumTripletNew}{2} - i\frac{\sumTripletNew}{4}\Big)
    + 3 \sum_{i=0}^{\nbTriplet/2 - 1} i (\sumTripletNew + \sizeSeparator)
\Bigg)
\]

We can simplify this expression as follows:

\[
D_{3\text{-}P,UB} = D_{NF} + \frac{51 \nbTriplet^2 \sumTripletNew}{16} - \frac{\nbTriplet \sumTripletNew}{8} + \frac{3 \nbTriplet^2 \sizeSeparator}{4} + \frac{3 \nbTriplet \sizeSeparator}{2}.
\]




Since this is an upper bound and $\targetSum = D_{3\text{-}P,UB}$, it follows that if the instance of \textsc{3-Partition} is a \textit{yes}-instance, then the corresponding reduced instance of \sigmaDdecprob is also a \textit{yes}-instance.

\item  Let us now show that the reduced instance of \sigmaDdecprob is a yes-instance only if the \textsc{3-Partition} instance is a yes-instance.


 We note here that 
$D_{NF} + \frac{51 \nbTriplet^2 \sumTripletNew}{16} - \frac{\nbTriplet \sumTripletNew}{8} + \frac{3 \nbTriplet^2 \sizeSeparator}{4} + \frac{3 \nbTriplet \sizeSeparator}{2}$
is strictly smaller than $D_{NF} + 2 \factorK \sizeSeparator$. Indeed, since $\factorK = 4 \Big\lceil \frac{3}{4} \nbTriplet^2 + \frac{3}{2} \nbTriplet \Big\rceil$, we have $\factorK \sizeSeparator \geq \frac{3 \nbTriplet^2 \sizeSeparator}{4} + \frac{3 \nbTriplet \sizeSeparator}{2}$. Similarly, since $\sizeSeparator = 2 \Big\lceil \frac{51}{16} \nbTriplet^2 \sumTripletNew + \frac{\nbTriplet \sumTripletNew}{8} \Big\rceil$, it follows that $\factorK \sizeSeparator > \frac{51 \nbTriplet^2 \sumTripletNew}{16} - \frac{\nbTriplet \sumTripletNew}{8}$. Hence, we conclude that $D_{NF} + 2 \factorK \sizeSeparator > \targetSum$.


 Let us now assume that the reduced instance of \sigmaDdecprob\ is a \textit{yes}-instance, i.e., there exists an \textbf{optimal} schedule $\sched^*$ with deviation at most $\targetSum$. We now show that the tasks scheduled between the separator sets are necessarily grouped into triplets whose total processing time is exactly $\sumTripletNew$, except for the interval between \setSeparator{\nbTriplet/2} and \setSeparator{\nbTriplet/2+1}, which contains only tasks from \setCentral.

Let us assume, for the sake of contradiction, that there exists a time interval between two separator sets (except between \setSeparator{\nbTriplet/2} and \setSeparator{\nbTriplet/2+1}) that does not contain a triplet of tasks whose processing times sum exactly to $\sumTripletNew$.

We consider three subcases: (1) there exists an interval between two separator sets occupied only by a triplet of tasks from \setIntTask, but their processing times do not sum to $\sumTripletNew$; (2) no interval satisfies the condition of case (1), and there exists an interval containing a triplet of tasks from \setIntTask together with tasks from \setCentral such that their total processing time is strictly larger than $\sumTripletNew$; (3) all intervals between separator sets are occupied either by a triplet of tasks from \setIntTask with total processing time $\sumTripletNew$, or by a set of tasks from \setIntTask and \setCentral with total processing time at most $\sumTripletNew$. In all cases, we either find a lower bound $D_{N-3-P,LB}$ on the total deviation of $\sched^*$ that is incompatible with the deviation being smaller than $\targetSum$, or show that $\sched^*$ is not optimal. For the rest of the proof, we assume $i < \nbTriplet/2$, but the argument is symmetric if $i > \nbTriplet/2$. By symmetry, ``the first'' interval, i.e., the first from \setSeparator{0}, becomes the last, i.e., the first starting from \setSeparator{\nbTriplet+2}; separator sets \setSeparator{i} and \setSeparator{i+1} become \setSeparator{i} and \setSeparator{i-1}; and tasks from \setSeparator{i+1} being early (resp. late) correspond to tasks from \setSeparator{i-1} being late (resp. early).

\begin{itemize}

    \item (1) There exist two separator sets such that $\sched^*$ schedules three tasks from \setIntTask whose total processing time differs from $\sumTripletNew$. Consider the first pair of separators, \setSeparator{i} and \setSeparator{i+1}, satisfying this condition. Since these tasks do not form a triplet summing to $\sumTripletNew$ (as tasks in \setIntTask have processing times between $\sumTripletNew/4$ and $\sumTripletNew/2$, any set with cardinality different from 3 cannot sum to $\sumTripletNew$), their total processing time differs from $\sumTripletNew$.  

All processing times in \setIntTask are obtained by multiplying an integer $x_j$ by $\factorK$, so the sum can be written as $w \factorK$ for some integer $w \neq \sumTriplet$. Therefore, the total processing time differs by at least $\factorK$ units, implying that the $\sizeSeparator$ tasks of \setSeparator{i+1} are at least $\factorK$ units away from their preferred completion times. Hence, we can lower bound the total deviation of $\sched^*$ as
$$
D_{N-3-P,LB} = D_{NF} + 3 \factorK \sizeSeparator.
$$
Since $\factorK, \sizeSeparator > 1$ and $D_{NF} + 2 \factorK \sizeSeparator > \targetSum$, this leads to a contradiction.


\item (2) No interval satisfies case (1), but there exists an interval between two separator sets containing a triplet of tasks from \setIntTask together with tasks from \setCentral, such that their total processing time exceeds $\sumTripletNew$. Consider the first pair of separators, \setSeparator{i} and \setSeparator{i+1}, satisfying this condition. The tasks from \setIntTask have total processing time at least $\sumTripletNew$. In this case, the tasks of \setSeparator{i+1} are necessarily late, since the tasks from \setSeparator{i} were completed on time and the total processing time in between exceeds $\sumTripletNew$.  

By swapping one of the \setCentral tasks to after \setSeparator{i+1}, the deviation of the \setSeparator{i+1} tasks is reduced by $3 \sizeSeparator$, and the swapped \setCentral task also gets closer to its median completion time. This lowers the total deviation, contradicting the optimality of $\sched^*$.


\item (3) All time intervals between separator sets are occupied either by a triplet of tasks from \setIntTask summing to $\sumTripletNew$, or by a set of tasks from \setIntTask and \setCentral with total processing time at most $\sumTripletNew$. We distinguish two subcases:  

\begin{itemize}
    \item (3.1) There are $\nbTriplet-1$ intervals occupied by triplets of tasks from \setIntTask summing exactly to $\sumTripletNew$.
    \item (3.2) There are at least two intervals containing a mixture of tasks from \setIntTask and \setCentral.
\end{itemize}

Let us now consider each of these two cases.
    
    \begin{itemize}
        %

\item (3.1) If there are $\nbTriplet-1$ intervals containing triplets of tasks from \setIntTask, each with total processing time $\sumTripletNew$, then the last three tasks of \setIntTask must also sum exactly to $\sumTripletNew$, since the total processing time of all tasks from \setIntTask is precisely $\nbTriplet \sumTripletNew$. By assumption, the last interval cannot contain a full triplet of total processing time $\sumTripletNew$, so it contains at most two tasks from~\setIntTask.  

The tasks from \setSeparator{i+1} are early by the difference between $\sumTripletNew$ and the processing time of the two tasks from \setIntTask, which is at least $\sumTripletNew/4$, minus the number of tasks from \setCentral in the interval. These tasks therefore deviate by at least $\sizeSeparator$ units from their median completion time. Let $y$ denote the number of tasks from \setCentral in the interval. Then we can lower bound the deviation by
$$
D_{N-3-P,LB} \geq D_{NF} + 3 (\sumTripletNew/4 - y) \sizeSeparator + y \sizeSeparator.
$$
This expression is minimized when $y = \sumTripletNew/4$, giving
$$
D_{N-3-P,LB} \geq D_{NF} + \frac{\sumTripletNew \sizeSeparator}{4} \geq D_{NF} + 2 \factorK \sizeSeparator,
$$
which is a contradiction.




        \item (3.2) There are at least two time intervals containing both tasks from \setCentral and \setIntTask. Using the same argument than in case (3.1), the set of task of \setIntTask has a total processing time of at most $\factorK(\sumTriplet-1)$. The tasks from set \setSeparator{i+1} are early or on time, since we are not in case (2). Now, if there are $y$ tasks from \setCentral, it means that tasks from the second separator sets are early by at least $\factorK-y$ units of time. We can then lower bound the deviation caused by each of these interval by $3(\factorK-y)\sizeSeparator+y(\sizeSeparator)$. Clearly this is minimized by $y=\factorK$, meaning that for each of these intervals, the deviation is increased by at least $\factorK\sizeSeparator$, as there are at least two of these intervals, the total deviation is lower bounded by 
        $D_{N-3-P,LB}\geq D_{NF}+2\factorK\sizeSeparator$,  
        which is strictly larger than \targetSum, a contradiction.
        Note that this also holds if one of the two intervals is in the first half and the other in the second half of the schedule.
    \end{itemize}
\end{itemize}
\end{itemize}
This completes the proof.
\end{proof}

While the reduction with 4 voters can be easily adapted for the Spearman footrule extension of~\citeauthor{kumar2010generalized}~\cite{kumar2010generalized}, this is not the case for the reduction with 3 voters. This is because the second reduction uses the fact that it is more expensive, it terms of deviation, to move $p$ tasks of length 1 than to move 1 task of length $p$ from its median completion time. This does not hold when these costs are weighted by the length of the tasks. If one allows 5 voters (or more), it should be possible to modify this reduction by adding a copy of the first voter and a copy of the second. In such a situation moving a task from $\setCentral$ from its median completion time would cost $3$ times more. Indeed, the task would now move away for 4 voters and closer for 1, instead of away for 2 and closer for 1. On the other hand, moving a task from \setTask away from its median completion time would have the same cost, since it would go from away for 2, closer for 1, to away for 3, closer for 2. With these costs it should again be possible to find a value for \targetSum such that the only schedules meeting the \targetSum requirement are precisely the schedules corresponding to solutions for \textsc{3-Partition}. This leaves open the question of the complexity with exactly three voters for the definition of \citeauthor{kumar2010generalized}\cite{kumar2010generalized}.

\section{Conclusion}
\label{sec:conclu}

We investigated the complexity of an extension of the Spearman footrule with a small fixed number of voters, showing that it is NP-hard for 3 voters and 4 voters, our reduction with 4 voters can also be applied to another existing extension of Spearman footrule~\citep{kumar2010generalized}. 
This strong complexity guarantee could be balanced with positive results using parameterized complexity theory. While our results rule out the existence of XP algorithms parameterized only by the number of voters for the corresponding problems, one can see that the problem is FPT with respect to the number of tasks, as we only look for a permutation of the tasks. The hardness results in this paper rely on the length of the tasks, one promising parameter could be the maximum length of a task. It is also commonly used for parameterized algorithm in scheduling~\cite{mnich2018parameterized}. Interestingly, our reductions also rely on the fact that some tasks are scheduled  at the beginning of the schedule for one voter and at the very end by some other voter. Thus, it could be interesting to see whether the problem becomes easier when parameterized by the maximum difference between the maximum and minimum completion times of a task in the profile, a similar parameter, sometimes denoted as range of a candidate, exists in computational social choice~\cite{betzler2009fixed}.

\bibliographystyle{plainnat}
\bibliography{bibliography}

@book{brucker1999scheduling,
author = {Brucker, Peter},
title = {Scheduling Algorithms},
year = {2010},
isbn = {3642089070},
publisher = {Springer},
edition = {5th}
}

@book{brandt2016handbook,
  title={Handbook of computational social choice},
  author={Brandt, Felix and Conitzer, Vincent and Endriss, Ulle and Lang, J{\'e}r{\^o}me and Procaccia, Ariel D},
  year={2016},
  publisher={Cambridge University Press}
}

@inproceedings{pascual2018collective,
  TITLE = {{Collective Schedules: Scheduling Meets Computational Social Choice}},
  AUTHOR = {Pascual, Fanny and Rzadca, Krzysztof and Skowron, Piotr},

  BOOKTITLE = {{Seventeenth International Conference on Autonomous Agents and Multiagent Systems }},
  YEAR = {2018},
  MONTH = Jul,
  HAL_ID = {hal-01744728},
  HAL_VERSION = {v1},
}

@article{wan2013single,
  title={Single-machine scheduling to minimize the total earliness and tardiness is strongly NP-hard},
  author={Wan, Long and Yuan, Jinjiang},
  journal={Operations Research Letters},
  volume={41},
  number={4},
  pages={363--365},
  year={2013},
  publisher={Elsevier}
}

@article{kemeny1959mathematics,
  title={Mathematics without numbers},
  author={Kemeny, John G},
  journal={Daedalus},
  volume={88},
  number={4},
  pages={577--591},
  year={1959},
  publisher={JSTOR}
}

@article{diaconis1977spearman,
  title={Spearman's footrule as a measure of disarray},
  author={Diaconis, Persi and Graham, Ronald L},
  journal={Journal of the Royal Statistical Society: Series B (Methodological)},
  volume={39},
  number={2},
  pages={262--268},
  year={1977},
  publisher={Wiley Online Library}
}

@article{bartholdi1989voting,
  title={Voting schemes for which it can be difficult to tell who won the election},
  author={Bartholdi, John and Tovey, Craig A and Trick, Michael A},
  journal={Social Choice and welfare},
  volume={6},
  number={2},
  pages={157--165},
  year={1989},
  publisher={Springer}
}

@inproceedings{kumar2010generalized,
  title={Generalized distances between rankings},
  author={Kumar, Ravi and Vassilvitskii, Sergei},
  booktitle={Proceedings of the 19th international conference on World wide web},
  pages={571--580},
  year={2010}
}

@article{EdmondsK72,
  author       = {Jack Edmonds and
                  Richard M. Karp},
  title        = {Theoretical Improvements in Algorithmic Efficiency for Network Flow
                  Problems},
  journal      = {J. {ACM}},
  volume       = {19},
  number       = {2},
  pages        = {248--264},
  year         = {1972},
  bibsource    = {dblp computer science bibliography, https://dblp.org}
}

@article{Tomizawa71,
  author       = {N. Tomizawa},
  title        = {On some techniques useful for solution of transportation network problems},
  journal      = {Networks},
  volume       = {1},
  number       = {2},
  pages        = {173--194},
  year         = {1971},
  bibsource    = {dblp computer science bibliography, https://dblp.org}
}

@book{garey2002computers,
  title={Computers and intractability},
  author={Garey, Michael R and Johnson, David S},
  volume={29},
  year={2002},
  publisher={wh freeman New York}
}

@InProceedings{durand2022collective,
author="Durand, Martin
and Pascual, Fanny",
editor="Kanellopoulos, Panagiotis
and Kyropoulou, Maria
and Voudouris, Alexandros",
title="Collective Schedules: Axioms and Algorithms",
booktitle="Algorithmic Game Theory",
year="2022",
publisher="Springer International Publishing",
address="Cham",
pages="454--471",
isbn="978-3-031-15714-1"
}

@article{mnich2018parameterized,
  title={Parameterized complexity of machine scheduling: 15 open problems},
  author={Mnich, Matthias and Van Bevern, Ren{\'e}},
  journal={Computers \& Operations Research},
  volume={100},
  pages={254--261},
  year={2018},
  publisher={Elsevier}
}

@article{betzler2009fixed,
  title={Fixed-parameter algorithms for Kemeny rankings},
  author={Betzler, Nadja and Fellows, Michael R and Guo, Jiong and Niedermeier, Rolf and Rosamond, Frances A},
  journal={Theoretical Computer Science},
  volume={410},
  number={45},
  pages={4554--4570},
  year={2009},
  publisher={Elsevier}
}

\end{document}